\newcommand{\overbar}[1]{\mkern 1.5mu\overline{\mkern-1.5mu#1\mkern-1.5mu}\mkern 1.5mu}
\DeclareMathOperator*{\argmax}{arg\,max}
\DeclareMathOperator*{\argmin}{arg\,min}
\newcommand*{\Scale}[2][4]{\scalebox{#1}{$#2$}}%
\newtheorem{theorem}{Theorem}
\newtheorem{definition}{Definition}
\newtheorem{corollary}{Corollary}
\newtheorem{remark}{Remark}
\newcommand\ceil[1]{\lceil#1\rceil}
\newcommand{\GC}{GDCN}
\begin{document}

	\title{{\color{black}Power-Aware} Allocation of Graph Jobs in Geo-Distributed Cloud Networks}

	\author{Seyyedali Hosseinalipour,~\IEEEmembership{Student Member,~IEEE,} Anuj Nayak, and Huaiyu Dai,~\IEEEmembership{Fellow,~IEEE}
	\IEEEcompsocitemizethanks{\IEEEcompsocthanksitem{Seyyedali Hosseinalipour, Anuj Nayak, and Huaiyu Dai are with the Department
of Electrical and Computer Engineering, North Carolina State University, Raleigh,
NC, USA e-mail: {shossei3,aknayak,hdai}@ncsu.edu.}}}% <-this %
	
	% The paper headers
	%\markboth{Journal of \LaTeX\ Class Files,~Vol.~14, No.~8, August~2015}%
	%{Shell \MakeLowercase{\textit{et al.}}: Bare Demo of IEEEtran.cls for IEEE Journals}

	% make the title area
\IEEEtitleabstractindextext{
	
	% As a general rule, do not put math, special symbols or citations

	\begin{abstract}
		In the era of big-data, the jobs submitted to the clouds exhibit complicated structures represented by graphs, where the nodes denote the sub-tasks each of which can be accommodated at a \textit{slot} in a server, while the edges indicate the communication constraints among the sub-tasks. We develop a framework for efficient allocation of graph jobs in geo-distributed cloud networks ({\GC}s), explicitly considering the power consumption of the datacenters (DCs). We address the following two challenges arising in graph job allocation: i) the allocation problem belongs to NP-hard nonlinear integer programming; ii) the allocation requires solving the NP-complete sub-graph isomorphism problem, which is particularly cumbersome in large-scale {\GC}s. We develop a suite of efficient solutions for GDCNs of various scales. For small-scale {\GC}s, we propose an analytical approach based on convex programming. For medium-scale GDCNs, we develop a distributed allocation algorithm exploiting the processing power of DCs in parallel. Afterward, we provide a novel low-complexity (decentralized) sub-graph extraction method, based on which we introduce \textit{cloud crawlers} aiming to extract allocations of good potentials for large-scale GDCNs. Given these suggested strategies, we further investigate strategy selection under both fixed and adaptive DC pricing schemes, and propose an online learning algorithm for each. 
	\end{abstract}

	% Note that keywords are not normally used for peerreview papers.
	\begin{IEEEkeywords}
		Big-data, graph jobs, geo-distributed cloud networks, datacenter power consumption, job allocation, integer
programming, convex optimization, online learning.
	\end{IEEEkeywords}}

\maketitle

\IEEEdisplaynontitleabstractindextext

\IEEEpeerreviewmaketitle
	
        %%%%%%%%%%%%%%%%%%%%%%%%%%%%%%%%%%%%%%%%%%%%%%%%
    %%%%%%%%%%%%%%%%%%%%%%%%%%%%%%%%%%%%%%%%%%%%%%%%
    %%%%%%%%%%%%%%%%%%%%%%%%%%%%%%%%%%%%%%%%%%%%%%%%%
    %%%%%%%%%%%%%%%%%%%%%%%%%%%%%%%%%%%%%%%%%%%%%%%%%%
   
\IEEEraisesectionheading{\section{Introduction}}
 \vspace{-0.5mm}
   \noindent \IEEEPARstart{R}{ecently}, the demand for \textit{big-data} processing has promoted the popularity of cloud computing platforms due to their reliability, scalability and security~\cite{ref:bigdataRise,ref:BigdataIOT,ref:BidDataService,big2,big3,big1}. Handling Big-data applications requires unique system-level design since these applications, more than often, cannot be processed via a single PC, server, or even a datacenter (DC). To this end, modern parallel and distributed processing systems (e.g., Apache/Twitter Storm~\cite{ref:ApapcheStorm}, GraphLab \cite{ref:HadoopWebsite}, IBM InfoSphere \cite{bookIBM}, MapReduce~\cite{onlinemapreduce}) are developed. In this work, we propose a framework for allocating big-data applications represented via \textit{graph jobs} in geo-distributed cloud networks ({\GC}s), explicitly considering the power consumption of the DCs. In the graph job model, each node denotes a sub-task of a big-data application while the edges impose the required communication constraints among the sub-tasks. 
    One of the common examples of processing graph jobs is receiving data from Twitter and counting the number of times a hashtag is mentioned, to keep an ordered list of the most commonly mentioned hashtags. Each step of the process is carried on in a so-called processing element, and it's these elements that enforce the separation of each logical step of the process (e.g. receiving updates, extracting hashtags, counting hashtags, ordering hastag count list) and allow the execution of the process on a distributed platform~\cite{Yahoos4}. In this context, a graph job is formed by viewing each element as a node and data exchange requirement between the elements as edges. As the sizes of the problem and graph jobs increase, one can imagine that a coalition of multiple DCs achieved through GDCNs is required for the execution of the graph jobs. 
 
   \subsection{Related Work}
There is a body of literature devoted to task and resource allocation in contemporary cloud networks, e.g.,~\cite{ref:allocation1,ref:allocation5,extref1,extref2,extref3,kaewpuang2013framework,ref:DGLB,survey,extra1,extra4}, {\color{black} where the topology of the graph job is not explicitly considered into their model}. In~\cite{ref:allocation1}, the task placement and resource allocation plan for embarrassingly parallel jobs, which are composed of a set of independent tasks, is addressed to minimize the job completion time. To this end, three algorithms named TaPRA, TaPRA-fast, and OnTaPRA are proposed, which significantly reduce the job execution time as compared to the state-of-the-art algorithms. In~\cite{ref:allocation5}, the multi-resource allocation problem in cloud computing systems is addressed through a mechanism called DRFH, where the resource pool is constructed from
a large number of heterogeneous servers containing various number of \textit{slots}. It is shown that DRFH leads to much higher resource utilization with considerably shorter job completion times.  
{\color{black}
In \cite{extref1}, the authors develop an online
job scheduling algorithm to distribute incoming workloads 
across multiple DCs targeting energy cost minimization with fairness
consideration subject to job delay requirements. They demonstrate that the solution of their online algorithm, which is solely based on current job queue lengths,
server availability and electricity prices, is close to the offline optimal performance
with future information.
In~\cite{extref2}, distribution of
the incoming workload among multiple DCs and adjustment of the service rates of the cloud servers are addressed aiming to reduce the power consumption cost. In
\cite{extref3}, the problem of directing the client requests to an appropriate DC efficiently and sending back the response packets to the client through one of the available links in the network is formulated as a workload management
optimization problem. To tackle the problem, the authors propose a distributed algorithm inspired by the
alternating direction method of multipliers.
} In~\cite{kaewpuang2013framework},
a resource allocation scheme is proposed resulting in efficient utilization of the resources
while increasing the revenue of the mobile cloud service providers. One of the pioneer works addressing resource allocation in {\GC}s {\color{black} considering the power consumption state of the DCs} is~\cite{ref:DGLB}, where a distributed algorithm, called DGLB, is proposed  for real-time
geographical load balancing. A good survey of the current state of the art is given in~\cite{survey}. Also, there is a body of literature utilizing swarm-based algorithms to perform the job allocation in cloud networks, e.g., \cite{extra1,extra4}. None of the above works has considered allocation of big-data
jobs composed of multiple sub-tasks requiring certain communication constraints among their sub-tasks. 

Allocation of big-data jobs represented by graph structures is a complicated process entailing more delicate analysis.  Among limited literature, references~\cite{zhou,ref:javad,ref:javad2} are most relevant, which focus on minimizing the cost incurred by utilizing the links among the adjacent DCs while neglecting the power consumption and the status of the utilized DCs. In~\cite{zhou}, a heuristic algorithm is developed to match the vertices of graph jobs to the idle slots of the cloud servers considering the cost of using the communication infrastructure of the network to handle the data flows among the sub-tasks. Using a similar system model in~\cite{ref:javad,ref:javad2}, the authors developed randomized algorithms for the same purpose. As compared to the heuristic approach of~\cite{zhou}, the authors of~\cite{ref:javad,ref:javad2} also demonstrate the optimality of their proposed algorithms through a theoretical approach. However, the algorithms used in these references are developed for a fixed network cost configuration, i.e., the cost of job execution using the same allocation strategy is fixed throughout the time. Also, as mentioned in~\cite{ref:nonpr}, the randomized algorithms proposed in~\cite{ref:javad,ref:javad2} suffer from long convergence time. In summary, the system model and the algorithms proposed in~\cite{zhou,ref:javad,ref:javad2} suffer from the following three limitations. i) The proposed algorithms are impractical in scenarios that the job allocation needs to be performed with respect to a time varying network cost configuration. ii) The proposed methods are impractical for large-scale networks. This is due to the fact that efficient handling of the NP-complete sub-graph isomorphism problem, which is a prerequisite to identify feasible allocations for graph jobs, is not directly addressed in these works (see Section~\ref{sec:Large}). iii) The proposed system models do not capture the power consumption of the utilized DCs. This is despite the fact that in {{\GC}}s, the execution cost is mainly determined by the real-time power consumption of the DCs~\cite{ref:powerDCsurvey}. Hence, an applicable allocation framework should be capable of fast allocation of incoming graph jobs to the {{\GC}}s considering the effect of allocation on the current DCs' power consumption state. Also, with the rapid growth in the size of cloud networks, adaptability to large-scale {\GC}s is a must for such a framework. These are the main motivations behind this work.

\subsection{Contributions}
The main goal of this paper is to provide a framework for graph job allocation in {\GC}s with various scales. Our main contributions can be summarized as follows:

\noindent 1) We formulate the problem of graph job allocation in {\GC}s considering the incurred power consumption on the cloud network. 

\noindent 2) We propose a centralized approach to solve the problem suitable for small-scale cloud networks. 

\noindent 3) We design a distributed algorithm for allocation of graph jobs in medium-scale {{\GC}}s, using the DCs' processing power in parallel. 

\noindent 4) For large-scale {{\GC}}s, given the huge size of the strategy set, and extremely slow convergence of the distributed algorithm, we introduce the idea of \textit{cloud crawling}. In particular, we propose a fast method to address the NP-complete \textit{sub-graph isomorphism problem}, which is one of the major challenges for graph job allocation in cloud networks. In this regard, we propose a novel low-complexity (decentralized) sub-graph isomorphism extraction algorithm for a cloud crawler to identify ``potentially good" strategies for customers while traversing a {\GC}.

\noindent 5) For large-scale {\GC}s, considering the suggested strategies of cloud crawlers, we find the best suggested strategies for the customers under adaptive and fixed pricing of the DCs in a distributed fashion. To this end, we model proxy agents' behavior in a {\GC}, based on which we propose two online learning algorithms inspired by the concept of ``regret" in the bandit problem~\cite{ref:gambler,ref:regretmatching}.

This paper is organized as follows. Section~\ref{sec:systModel} includes system model. Section~\ref{sec:Small} contains a sub-optimal approach for graph job allocation in small-scale {\GC}s. A distributed graph job allocation mechanism for medium-scale {\GC}s is presented in Section \ref{sec:Medium}. Cloud crawling along with online learning algorithms for large-scale {\GC}s are presented in Section~\ref{sec:Large}. Simulation results are given in section~\ref{sec:Simu}. Finally, Section~\ref{sec:concl} concludes the paper.

        %%%%%%%%%%%%%%%%%%%%%%%%%%%%%%%%%%%%%%%%%%%%%%%%
    %%%%%%%%%%%%%%%%%%%%%%%%%%%%%%%%%%%%%%%%%%%%%%%%
    %%%%%%%%%%%%%%%%%%%%%%%%%%%%%%%%%%%%%%%%%%%%%%%%%
    %%%%%%%%%%%%%%%%%%%%%%%%%%%%%%%%%%%%%%%%%%%%%%%%%%
    \section{System Model}\label{sec:systModel}
    \noindent A {\GC} comprises various DCs connected through communication links. Inside each DC, there is a set of fully-connected cloud servers each consisting of multiple fully-connected \textit{slots}. Without loss of generality, we assume that all the cloud servers have the same number of slots. Each slot corresponds to the same bundle of processing resources which can be utilized independently. Since all the slots belonging to the same DC are fully-connected, we consider a DC as a collection of slots directly in our study.\footnote{The number of cloud servers does not play a major role in our study except in the energy consumption models.} It is assumed that a DC provider (DCP) is in charge of DC management. Abstracting each DC to a \textit{node} and a communication link between two DCs as an \textit{edge}, a {{\GC}} with $n_d$ DCs can be represented as a graph $G_D=(\mathcal{D},\mathcal{E}_D)$, where $\mathcal{D}=\{d^1,\cdots,d^{n_d}\}$ denotes the set of nodes and $\mathcal{E}_D$ represents the set of edges. Henceforth, $G_D$ is assumed to be \textit{connected}; however, due to the geographical constraints, $G_D$ may not be a complete graph.
    
    Let $\mathcal{S}^i=\{S^i_1,\cdots,S^i_{|\mathcal{S}^i|}\}$ denote the set of slots belonging to DC $d^i$. The existence of a connection between two DCs leads to the communication capability between all of their slots. Consequently, two slots are called \textit{adjacent} if and only if both belong to the same DC or there exists a link between their corresponding DCs. Let ${\color{black}\mathcal{E}_S}$ denote the set of edges between the adjacent slots, where $(S^i_k,S^j_m)\in {\color{black}\mathcal{E}_S}$ if and only if $i=j, \forall k\neq m$ or $(d^i,d^j)\in \mathcal{E}_D, \forall k,m$. We define the \textit{aggregated network graph} as $G=(\mathcal{V}_S,{\color{black}\mathcal{E}_S})$, where $\mathcal{V}_S=\cup_{i=1}^{n_d} \mathcal{S}^i$ and $|\mathcal{V}_S|= \sum_{i=1}^{n_d} |\mathcal{S}^i|$.

    Let $\mathcal{J}=\{Gjob_1, Gjob_2,\cdots,Gjob_J\}$, denote the set of all possible types of the graph jobs in the system, each of which is considered as a graph $Gjob_j=(\mathcal{V}_j,\mathcal{E}_j)$. Each node of a graph job requires one slot from a DC to get executed. It is assumed that $\mathcal{V}_j=\{v^1_j,\cdots,v^{n_j}_j\}$, and $\forall (m,n):1\leq m \neq n \leq n_j$, $\left(v^m_j,v^n_j\right) \in \mathcal{E}_j$ if and only if the nodes $v^m_j$ and $v^n_j$ need to be executed using two adjacent slots of the {{\GC}}. {\color{black} Similar to~\cite{ref:javad,ref:javad2,zhou}, we assume that allocation of all the nodes is required during the execution of the respective job.}
    
The system model is depicted in Fig.~\ref{diag:sysmodel}. For the small- and medium-scale {\GC}s, the {\GC} network is assumed to be in charge of finding adequate allocations for the incoming graph jobs from proxy agents (PAs)~(\hspace{-1.59mm} \cite{ref:agent,Ali:twoStage,Ali:option}), which act as trusted parties between the {\GC} and the customers. In these cases, each graph job is allocated through either a centralized controller or a distributed algorithm utilizing the communication infrastructure between the DCs (see Section~\ref{sec:Medium}). For large-scale {\GC}s, cloud crawlers are introduced to explore the {\GC} to provide a set of suggested strategies for the PAs. Afterward, PAs allocate their graph jobs with respect to the utility of the suggested strategies (see Section~\ref{sec:Large}). The following definitions are introduced to facilitate  our subsequent derivations.    
    \begin{definition}
    A \textbf{feasible mapping} between a $Gjob_j$ and the {{\GC}} is defined as a mapping $f_j:\mathcal{V}_j \mapsto \mathcal{V}_S$, which satisfies the communication constraints of the graph job. This implies that $\forall (m,n): 1\leq m \neq n \leq |\mathcal{V}_j|$, if $(v^m_j,v^{n}_j)\in \mathcal{E}_j$, then $\left(f_j(v^m_j),f_j(v^{n}_j)\right)\in {\color{black}\mathcal{E}_S}$. Let $\mathcal{F}_j = \{f^1_j, \cdots, f^{|\mathcal{F}_j|}_j\}$ denote the set of all feasible mappings for the $Gjob_j$.
    \end{definition}
    \begin{definition}
    For a $Gjob_j$, a \textbf{mapping vector} associated with a feasible mapping $f^k_j\in \mathcal{F}_j $ is defined as a vector $\mathbf{M}_j|_{f^k_j} =[m^1_j|_{f^k_j},\cdots,m^{n_d}_j|_{f^k_j}]\in (\mathbb{Z}^+ \cup \{0\})^{n_d}$, where $m^i_j|_{f^k_j}$ denotes the number of used slots from DC $d^i$. Mathematically, $m^i_j|_{f^k_j}=\sum_{l=1}^{|\mathcal{V}_j|} \mathbf{1}_{\{f^k_j(v^l_j)\in \mathcal{S}^i\}}$, where $\mathbf{1}_{\{.\}}$ represents the indicator function. Let $\mathcal{M}_j=\{ \mathbf{M}_j|_{f^1_j},\cdots, \mathbf{M}_j|_{f^{|\mathcal{F}_j|}_j} \}$ denote the set of all mapping vectors for the $Gjob_j$. 
    \end{definition}
    
    Finding a feasible allocation/mapping between a graph job and a {\GC} is similar to the \textit{sub-graph isomorphism problem} in graph theory~\cite{subgraphNPcomplete}. Some examples of feasible allocations for a graph job with three nodes considering a {\GC} with four DCs each consisting of four slots is depicted in Fig.~\ref{fig:iso}.
    
   {\color{black} Our aim is to allocate big-data driven applications, e.g., computation intensive big-data applications~\cite{zhou} or data streams~\cite{ref:javad,ref:javad2}, to {\GC}s. Due to the nature of these applications, the jobs usually stay in the system so long as they are not terminated. This work can be considered as a real-time allocation of graph jobs to the system, where we find the best currently possible assignment considering the current network status. Hence, we deliberately omit the time index from the following discussions.}    Inspired by~\cite{ref:powerInfo,ref:powerDCsurvey}, we model the power consumption upon utilizing $s$ slots of $d^i$ comprising $N^i$ cloud servers each with idle power consumption $P^i_{idle}$ as:
   \vspace{-2mm}
   \begin{equation}\label{eq:powerPrice0}
  \eta^i N^i \left( \sigma^i \left(\frac{s}{|\mathcal{S}^i|}\right)^{\alpha^i}+P^i_{idle} \right),  \alpha^i\geq 2.
   \end{equation}
   In this model, $\eta^i$ is the so-called \textit{Power Usage Effectiveness}, which is the ratio between the total power usage (including cooling, lights, UPS, etc.) and the power consumed by the IT-equipment of a DC, and $\sigma^i$ is chosen in such a way that $\sigma^i+P^i_{idle}$ determines the peak power consumption of a cloud sever $P^i_{max}$ inside $d^i$. Also, $\alpha^i$ is a DC-related constant. Subsequently, we define the incurred cost of executing a graph job with type $j$ allocated according to the feasible mapping vector $\mathbf{M}_j=[m^1_j,\cdots,m^{n_d}_j]$ as follows:
   \vspace{-2mm}
    \begin{equation}\label{eq:powerPrice}
   \Scale[0.98]{\displaystyle \sum_{i=1}^{n_d}    \xi^i \eta^i N^i \left( \sigma^i \left(\frac{{L}^i+m^i_{j}}{|\mathcal{S}^i|}\right)^{\alpha^i}+P^i_{idle} \right) + \sum_{i=1}^{n_d} \xi^i \nu^i m^i_j,}
    \end{equation}
   where ${L}^i$ is the original load of DC $d^i$, $\nu^i$ indicates the I/O incurred power of using the communication infrastructure of DC $d^i$ per slot, and $ \xi^i$ is the ratio between the cost and power consumption, which is dependent on the DC's location and infrastructure design. The I/O cost is considered to be proportional to the number of used slots since the data generated at each DC is correlated with that number, and that data should be exchanged using the I/O infrastructure either among adjacent DCs or between DCs and the users. {\color{black} Note than Eq.~\eqref{eq:powerPrice0} and Eq.~\eqref{eq:powerPrice} do not capture the order of the slots used in each DC and assume that the utilization of a slot from every server in a DC requires the same power consumption. However, in reality some servers might be in the idle mode, which need more power to boot and execute the process. Since each DC may contain tens of servers, considering the status of each server increases the dimension of the problem significantly, which makes the problem intractable even in small-scale GDCNs. Also, obtaining the status of all the servers in all the DCs is challenging.  Addressing these issues is out of the scope of this paper and left as a future work. In this paper, we assume that after allocation of the graph jobs to the GDCN and sending the information to the respective DCs, each DC manager makes an internal decision about the effective usage of the servers' slots considering the status of the servers. }
\begin{figure*}[t]
		\minipage{13cm}
		\vspace{-2mm}
		\includegraphics[width=130mm,height=70mm]{Figures/cloud_gdcn7.pdf}
		\caption{System model for graph job allocation in {\GC}s with various scales.}
		\label{diag:sysmodel}
		\endminipage\hfill
		\vspace{-5mm}
				\quad
		        \minipage{4cm}
		\includegraphics[width=1.55in,height=55mm]{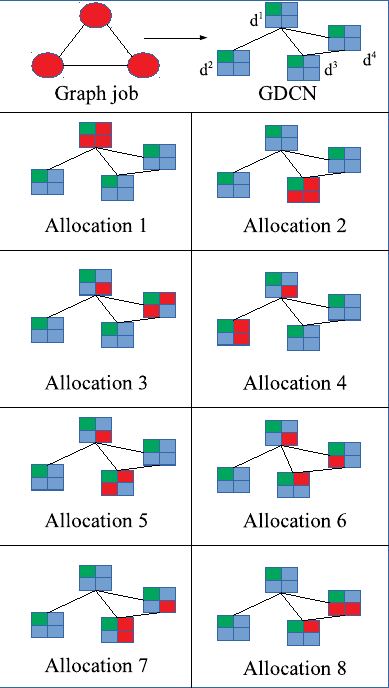}
		\caption{Examples of graph job allocation. The green (blue) color denotes busy (idle) slots. The red color indicates the utilized slots upon allocation.}
		 \label{fig:iso}
		\endminipage
	\end{figure*}

\subsection{Problem Formulation}
Our goal is to find an allocation for each arriving graph job to minimize the total incurred cost on the network. Due to the inherent relation between the cost and loads of the DCs, minimizing the cost is coupled with balancing the loads of the DCs. In a {\GC}, let $N_j$ denote the number of $Gjob_j \in \mathcal{J}$ in the system demanded for execution. Let $\mathbf{\overbar{M}}_{j}$ denote the matrix of mapping vectors of these graph jobs defined as follows:

 {\small 
      \[
\mathbf{\overbar{M}}_{j}=
  \begin{bmatrix}
 \mathbf{M}_{j,(1)},\mathbf{M}_{j,(2)},\cdots,\mathbf{M}_{j,({N_j})}
  \end{bmatrix},
  \;\; \forall j\in \{1,\cdots, J\},
  \]
  }
         {\small 
      \[
\mathbf{M}_{j,(i)}=
  \begin{bmatrix}
    m^1_{j,(i)} ,
    m^2_{j,(i)} ,
    \cdots ,
    m^{n_d}_{j,(i)}
  \end{bmatrix}^\top,
  \; \forall i\in \{1,\cdots, N_j\}.
\]
    }
    
We formulate the \textit{optimal graph job allocation} as the following optimization problem ($\mathcal{P}1$): 
\vspace{-2mm}
        \begin{flalign}
    &\hspace{5mm}[\mathbf{\overbar{M}}^*_1,\mathbf{\overbar{M}}^*_2,\cdots,\mathbf{\overbar{M}}^*_J]=\nonumber\\
    &\hspace{3mm}\argmin_{[\mathbf{\overbar{M}}_1,\mathbf{\overbar{M}}_2,\cdots,\mathbf{\overbar{M}}_J]} \sum_{i=1}^{n_d}    \xi^i \eta^i N^i  \Bigg( \sigma^i \left(\frac{L^{i}+\sum_{j=1}^{J} \sum_{k=1}^{N_j} m^i_{j,(k)}}{|\mathcal{S}^i|}\right)^{\alpha^i}\nonumber \\
    &\hspace{5mm}+P^i_{idle}  \Bigg) + \sum_{i=1}^{n_d} \sum_{j=1}^{J} \sum_{k=1}^{N_j}  \xi^i \nu^i m^i_{j,(k)}  \\
     &\hspace{5mm}\textrm{s.t.}\nonumber\\
     &\hspace{5mm}\sum_{j=1}^{J} \sum_{k=1}^{N_j} m^i_{j,(k)}  \leq |\mathcal{S}|^{i}-L^{i},\;\;\forall i\in\{1,\cdots,n_d\},\label{eq:const1}  \\
     &\hspace{5mm}\mathbf{M}_{j,(i)} \in \mathcal{M}_j, \;\;\forall j\in \{1,\cdots,J\}, \forall i \in\{1,\cdots,N_j\}\label{eq:const2} .
    \end{flalign}
           In $\mathcal{P}1$, the objective function is the total incurred cost of execution, {\color{black}the first condition given by \eqref{eq:const1} ensures the stability of the DCs, and the second constraint given by \eqref{eq:const2} guarantees the feasibility of the assignment.} There are two main difficulties in obtaining the solution: i) Identifying the feasible mappings ($\mathcal{M}_j$-s) requires solving the sub-graph isomorphism problem between the graph jobs' topology and the aggregated network graph, which is categorized as NP-complete~\cite{subgraphNPcomplete}. Hence, we only assume the knowledge of $\mathcal{M}_j$-s in the small- and medium-scale {\GC}s. In the large-scale {\GC}s, we propose a low-complexity decentralized approach to extract isomorphic sub-graphs to a graph job and implement it in our proposed cloud crawlers. ii) $\mathcal{P}1$ is a nonlinear integer programming problem, which is known to be NP-hard. In small- and medium-scale {\GC}s, we tackle this problem considering a convex relaxed version of it. However, for large-scale {\GC}s,  we find a ``potentially good" subset of feasible mappings as the cloud crawlers traverse the network. Afterward, the strategy selection is carried out using the computing power of the PAs in a decentralized fashion.
 {\color{black}
 \begin{remark}
 Considering the possibility of link outages and security preferences, the users may prefer utilizing fewer DCs during the job execution. Since these situations are more likely to happen in large-scale GDCNs, we incorporate this tendency to the utility function of the users, i.e., Eq.~\eqref{eq:util2} and~Eq.~\eqref{eq:util1}, described in Section~\ref{sec:Large}.
 \end{remark}}
        %%%%%%%%%%%%%%%%%%%%%%%%%%%%%%%%%%%%%%%%%%%%%%%%
    %%%%%%%%%%%%%%%%%%%%%%%%%%%%%%%%%%%%%%%%%%%%%%%%
    %%%%%%%%%%%%%%%%%%%%%%%%%%%%%%%%%%%%%%%%%%%%%%%%%
    %%%%%%%%%%%%%%%%%%%%%%%%%%%%%%%%%%%%%%%%%%%%%%%%%%
       \begin{table}    
{\footnotesize
\caption{Major notations.}
    \begin{center}
    \vspace{-2mm}
\begin{tabular}{ |c |c|} 
 \hline
 Symbol& Definition\\ \hline
 $G_D$ & The {{\GC}} graph  \\ \hline
 $\mathcal{D}$ & Set of DCs in the {{\GC}} \\ \hline
 $d^i$ & The DC with index $i$ \\ \hline
 $n_d$ & Number of DCs in the {{\GC}}  \\ \hline
  $\mathcal{S}^i$ & Set of slots of DC $d^i$  \\ \hline
 $G$ & Aggregated graph of the {{\GC}}  \\ \hline
   $\mathcal{V}_S$& Set of slots of the entire {\GC}\\ \hline
{\color{black}   $\mathcal{E}_D$}& {\color{black}Set of edges between adjacent DCs of a {\GC}}\\ \hline
  ${\color{black}\mathcal{E}_S}$& Set of edges between adjacent slots of DCs in a {\GC}\\ \hline
  $\mathcal{J}$ & Set of graph jobs in the system  \\ \hline
  $J$ & Number of different types of jobs in the system  \\ \hline
 $Gjob_j$ & Associated graph to the graph job with type $j$ \\ \hline
   $N_j$ & Number of jobs with type $j$ in the system   \\ \hline
    $\mathcal{V}_j$& Set of nodes of the graph job with type $j$\\ \hline
      $\mathcal{E}_j$& Set of edges of the graph job with type $j$\\ \hline
  $L^i$ & Load of DC $d^i$\\ \hline
  $N^i$ & Number of cloud servers in DC $d^i$   \\ \hline
    $\mathcal{M}_j$& Set of all the mapping vectors for $Gjob_j$ \\ \hline
       $\mathcal{P}$& Set of PAs in the system\\ \hline 
       $\mathcal{SA}_j$& Set of cloud crawler's suggested strategies for $Gjob_j$\\ \hline 
       $\mathbf{p}_{j,(m)}$& Probability of selection of strategy $m\in\mathcal{SA}_j$\\ \hline 
\end{tabular}
\end{center}
}
\end{table}
    \section{Graph Job Allocation in Small-Scale \GC s: Centralized Approach}\label{sec:Small}
 \noindent Solving $\mathcal{P}1$ requires solving an integer programming problem in $n_d{\sum_{j=1}^{J}} N_j$ dimensions. For a small {\GC} with three types of graph jobs ($J=3$), $5$ DCs ($n_d=5$), and $100$ graph jobs of each type in the system, the dimension of the solution becomes $1500$ rendering the computations impractical. To alleviate this issue, we solve $\mathcal{P}1$ in a sequential manner for available graph jobs in the system. In our approach, at each stage, the best allocation is obtained for one graph job while neglecting the presence of the rest. Afterward, the graph job is allocated to the {\GC} and the loads of the utilized DCs are updated. As a result, at each stage, the dimension of the solution is $n_d$ ($5$ in the above example). For a $Gjob_j\in\mathcal{J}$, let the available graph jobs be indexed from $1$ to $N_j$ according to their execution order, where preferred customers can be prioritized in practice. For a graph job with type $j$ with index $k$, we reformulate $\mathcal{P}1$ as ($\mathcal{P}2$):
        \begin{flalign}
        &\hspace{1.9mm}\mathbf{M}^*_{j,(k)}\hspace{-1mm}=\hspace{-0.5mm}\argmin_{\mathbf{M}_{j,(k)}} \sum_{i=1}^{n_d}    \xi^i \eta^i N^i \hspace{-1mm} \left( \sigma^i \left(\frac{L^{i}+ m^i_{j,(k)}}{|\mathcal{S}^i|}\right)^{\alpha^i}\hspace{-4mm}+P^i_{idle}  \right) \nonumber\\
    &\hspace{1.9mm}+ \sum_{i=1}^{n_d}  \xi^i \nu^i m^i_{j,(k)} \label{eq:int1}\\
     &\hspace{1.9mm}\textrm{s.t.}\nonumber\\
     &\hspace{1.9mm} m^i_{j,(k)}  \leq |\mathcal{S}^i|-L^{i}, \;\;\; \forall i\in\{1,2,\cdots,n_d\}, \label{eq:insideServerload}\\
     &\hspace{1.9mm}\mathbf{M}_{j,(k)}\in \mathcal{M}_j, \label{eq:int2}
    \end{flalign}
    where $L^{i}$ denotes the updated load of DC $d^i$ after the previous graph job allocation. The last constraint in $\mathcal{P}2$ forces the solution to be discrete making the derivation of a tractable solution impossible. In the following, we relax this constraint and provide a tractable method to derive the solution in the set of feasible points.
 For the moment, we consider $\mathbf{M}_{j,(k)} \in \mathbb{\left(R^+\right)}^{n_d},\; \forall j,k$. {\color{black} We define $\mathcal{P}3$ as the following optimization problem with the same objective function as $\mathcal{P}2$, in which the constraint given by Eq.~\eqref{eq:int2} is relaxed and represented as two constraints ($\mathcal{P}3$):
      \begin{flalign}
      &\hspace{1.9mm}\mathbf{M}^*_{j,(k)}\hspace{-1mm}=\hspace{-0.5mm}\argmin_{\mathbf{M}_{j,(k)}} \sum_{i=1}^{n_d}    \xi^i \eta^i N^i \hspace{-1mm} \left( \sigma^i \left(\frac{L^{i}+ m^i_{j,(k)}}{|\mathcal{S}^i|}\right)^{\alpha^i}\hspace{-4mm}+P^i_{idle}  \right) \nonumber\\
    &\hspace{1.9mm}+ \sum_{i=1}^{n_d}  \xi^i \nu^i m^i_{j,(k)} \\
     &\hspace{1.9mm}\textrm{s.t.}~\eqref{eq:insideServerload},\nonumber\\
     &\hspace{3mm}\sum_{i=1}^{n_d} m^i_{j,(k)}  =|\mathcal{V}_j|,\label{eq:insidefeasible}\\
     & \hspace{3mm} m^i_{j,(k)}\geq 0, \;\; \forall i\in \{1,2,\cdots,n_d\},\label{eq:rel2}
    \end{flalign}
 where Eq.~\eqref{eq:insidefeasible} ensures the assignment of all the nodes of the graph job to the {\GC}, and Eq.~\eqref{eq:rel2} guarantees the practicality of the solution.} It is easy to verify that $\mathcal{P}3$ is a convex optimization problem. We use the \textit{Lagrangian dual decomposition} method~\cite{boyd} to solve this problem. 
 Let $\bm{\lambda}=[\lambda^1,\lambda^2,\cdots,\lambda^{n_d}]$, $\gamma$, and $\bm{\Lambda}=[\Lambda^1,\Lambda^2,\cdots,\Lambda^{n_d}]$ denote the Lagrangian multipliers associated with the first, the second, and the third constraint, respectively. The \textit{Lagrangian function} associated with $\mathcal{P}3$ is then given by:
 \begin{flalign}
 &\hspace{5mm}L(\mathbf{M}_{j,(k)},\bm{\lambda},\gamma,\bm{\Lambda})= -\sum_{i=1}^{n_d} \Lambda^i m^i_{j,(k)}+\nonumber\\ 
 &\hspace{5mm}\sum_{i=1}^{n_d}   \xi^i \eta^i N^i  \left( \sigma^i \left(\frac{L^{i}+ m^i_{j,(k)}}{|\mathcal{S}^i|}\right)^{\alpha^i}\hspace{-4mm}+P^i_{idle}  \right) +\sum_{i=1}^{n_d} \xi^i \nu^i m^i_{j,(k)} \nonumber \\
&\hspace{4mm}+ \sum_{i=1}^{n_d} \lambda^i \left(m^i_{j,(k)} - |\mathcal{S}^i|+L^{i}\right) + \gamma \left( \sum_{i=1}^{n_d} m^i_{j,(k)}  - |\mathcal{V}_j|\right).
 \end{flalign}
 The corresponding \textit{dual function} of $\mathcal{P}3$ is given by:
  \begin{align}\label{eq:dualfunc}
  D(\bm{\lambda},\gamma,\bm{\Lambda})= \min_{\mathbf{M}_{j,(k)}}{L(\mathbf{M}_{j,(k)},\bm{\lambda},\gamma,\bm{\Lambda})}.
  \end{align}
 Finally, the \textit{dual problem} can be written as ($\mathcal{P}4$):
 \begin{align}\label{ref:dualProb}
  \max_{\bm{\lambda},\bm{\Lambda}\in (\mathbb{R}^+)^{n_d},\gamma \in \mathbb{R}} D(\bm{\lambda},\gamma,\bm{\Lambda}).
 \end{align}
 $\mathcal{P}3$ is a convex optimization problem with differentiable affine constraints; hence, it satisfies the \textit{constraint qualifications} implying a zero duality gap. As a result, the solution of $\mathcal{P}3$ coincides with the solution of $\mathcal{P}4$. It can be verified that the minimum of the Lagrangian function occurs at the following point:
 \begin{equation}\label{eq:ms}
\hspace{-1mm}\Scale[0.99]{{m^i_{j,(k)}}^* = \left(\frac{\Lambda^i - \lambda^i - \gamma -\xi^i \nu^i}{\frac{  \xi^i \eta^i N^i  \sigma^i \alpha^i}{(|\mathcal{S}^i|)^{\alpha^i}}}\right)^{\frac{1}{\alpha^i-1}}\hspace{-3mm} -L^i, \;\forall i\in \{1,\cdots,n_d \}.}
 \end{equation}
 By replacing this in the Lagrangian function, the dual function is given by: $D(\bm{\lambda},\gamma,\bm{\Lambda})=L(\mathbf{M}^*_{j,(k)},\bm{\lambda},\gamma,\bm{\Lambda})$, where $\small 
\mathbf{M}^*_{j,(k)}=[
     {m^1_{j,(k)}}^* ,
     {m^2_{j,(k)}}^* ,
    \cdots ,
     {m^{n_d}_{j,(k)}}^*
  ]^\top$.
 The optimal Lagrangian multipliers can be obtained by solving the dual problem given by:
 \begin{equation}\label{eq:nabla}
 \nabla D(\bm{\lambda},\gamma,\bm{\Lambda})|_{(\bm{\lambda}^*, \gamma^*,\bm{\Lambda}^*)}=0.
 \end{equation}
 Given the solution of Eq.~\eqref{eq:nabla}, the optimal allocation in $\mathbb{\left(R^+\right)}^{n_d}$ is given by $\mathbf{M}^*_{j,(k)}|_{(\bm{\lambda}^*, \gamma^*,\bm{\Lambda}^*)}$.
The solutions of Eq. $\eqref{eq:nabla}$ can be derived via the iterative \textit{gradient ascent algorithm}~\cite{boyd}.

Let $\widetilde{\mathbf{M}_{j,(k)}^*}=[\widetilde{{m^{1}_{j,(k)}}^*},\cdots,\widetilde{{m^{n_d}_{j,(k)}}^*}]^\top$ denote the derived solution in the continuous space, we obtain the solution of $\mathcal{P}2$ by solving the following weighted mean-square problem:\footnote{Instead of solving the mean-square problem, the k-d tree data structure~\cite{bentley1975multidimensional} can be used to find the closest feasible allocation in average complexity of $O(\log(|\mathcal{M}_{j}|))$, where $|\mathcal{M}_{j}|$ is the number of feasible allocations.}
\begin{equation}\label{eq:weightedmeanSquare}
\begin{aligned}
\mathbf{M}^*_{j,(k)}= \argmin_{\mathbf{M}_{j,(k)}\in \mathcal{M}_{j}} \sum_{i=1}^{n_d} w_i \left(m^i_{j,(k)}-\widetilde{{m^i_{j,(k)}}^*}\right)^2,
\end{aligned}
\end{equation} 
where $w_.$-s are the design parameters, which can be tuned to impose a certain tendency toward utilizing specific DCs.

So far, to derive the above solution, it is necessary to have a powerful centralized processor with global knowledge about the state of all the DCs. This is due to the inherent updating mechanism of the gradient ascent method~\cite{boyd}, in which iterative update of each Lagrangian multiplier requires global knowledge of the current values of the other Lagrangian multipliers and the DCs' loads. Obtaining this knowledge may not be feasible for a given {\GC} with more than a few DCs. Moreover, multiple powerful backup processors may be needed to avoid the interruption of the allocation process in situations such as overheating of the centralized processor. In the following section, we design a distributed algorithm using the processing power of the DCs in parallel to resolve the above concerns.
 %%%%%%%%%%%%%%%%%%%%%%%%%%%%%%%%%%%%%%%%
 %%%%%%%%%%%%%%%%%%%%%%%%%%%%%%%%%%%%%%%%
 %%%%%%%%%%%%%%%%%%%%%%%%%%%%%%%%%%%%%%%%
 %%%%%%%%%%%%%%%%%%%%%%%%%%%%%%%%%%%%%%%%
 %%%%%%%%%%%%%%%%%%%%%%%%%%%%%%%%%%%%%%%%

    %%%%%%%%%%%%%%%%%%%%%%%%%%%%%%%%%%%%%%%%%%%%%%%%
    %%%%%%%%%%%%%%%%%%%%%%%%%%%%%%%%%%%%%%%%%%%%%%%%
        %%%%%%%%%%%%%%%%%%%%%%%%%%%%%%%%%%%%%%%%%%%%%%%%
    %%%%%%%%%%%%%%%%%%%%%%%%%%%%%%%%%%%%%%%%%%%%%%%%
 \section{Graph Job Allocation in Medium-Scale {\GC}s: Decentralized Approach with DCs in Charge of Job Allocation}\label{sec:Medium}
\noindent The described dual problem in Eq.~\eqref{ref:dualProb}, given the result of Eq.~\eqref{eq:ms}, can be written as follows:
\begin{align}\label{eq:needdistributed}
&\max_{\lambda^i \in \mathbb{R}^+,\gamma \in \mathbb{R},\Lambda^i\in \mathbb{R}^+}\sum_{i=1}^{n_d} D^i({\lambda}^i,\gamma,{\Lambda}^i),
\end{align}
where
\begin{align}
&\hspace{-3.5mm}D^i({\lambda}^i,\gamma,{\Lambda}^i)=    \xi^i \eta^i N^i  \left( \sigma^i \left(\frac{L^{i}+ {m^i_{j,(k)}}^*}{|\mathcal{S}^i|}\right)^{\alpha^i}\hspace{-4mm}+P^i_{idle}  \right) \nonumber\\
&\hspace{-3.5mm}+  \xi^i \nu^i {m^i_{j,(k)}}^* + \lambda^i \left({m^i_{j,(k)}}^*  
- |\mathcal{S}^i|+L^{i}\right)\nonumber\\  
&\hspace{-3.5mm}+\hspace{-1mm} \gamma \hspace{-0.5mm}\left({m^i_{j,(k)}}^* \hspace{-1mm} - |\mathcal{V}_j|/n_d\right) \hspace{-1mm}-\hspace{-1mm} \Lambda^i {m^i_{j,(k)}}^*,\;\;\forall i\in \{1,\cdots,n_d \}.
\end{align}
In Eq.~\eqref{eq:needdistributed}, each term can be associated with a DC. For $d^i$, there are two private (local) variables $\lambda^i,\Lambda^i$ and a public (global) variable $\gamma$, which is identical for all the DCs. Due to the existence of this public variable, the objective function cannot be directly written as a sum of separable functions. In the following, we propose a distributed algorithm deploying local exchange of information among adjacent DCs to obtain a unified value for the public variable across the network.
\subsection{Consensus-based Graph Job Allocation}
We propose the consensus-based distributed graph job allocation (CDGA) algorithm consisting of two steps to find the solution of Eq.~\eqref{eq:needdistributed}: i) updating the local variables at each DC, ii) updating the global variable via forming a consensus among DCs. We consider each term of Eq.~\eqref{eq:needdistributed} as a (hypothetically) separate term and rewrite the problem as a summation of separable functions, with $\gamma$ replaced by $\gamma^i$ in $D^i(.,.,.)$:
\begin{equation}\label{eq:full_dist_1}
\max_{\lambda^i \in \mathbb{R}^+,\gamma^i \in \mathbb{R},\Lambda^i\in \mathbb{R}^+} \sum_{i=1}^{n_d} D^i({\lambda}^i,{\gamma}^i,{\Lambda}^i).
\end{equation} 
At each iteration of the CDGA algorithm, each DC first derives the value of the following variables locally using the gradient ascent method:
\begin{equation}\label{eq:full_dist_2}
\begin{aligned}
&\hspace{-3mm}\lambda^i(k+1)=\lambda^i(k)+ c_{\lambda} (\nabla_{\lambda^i} D^i({\lambda}^i(k),{\gamma}^i(k),{\Lambda}^i(k))),\\
&\hspace{-3mm}\gamma'^i(k+1)= \gamma^i(k) + c_{\gamma} (\nabla_{\gamma^i} D^i({\lambda}^i(k),{\gamma}^i(k),{\Lambda}^i(k))),\\
&\hspace{-3mm}\Lambda^i(k+1)= \Lambda^i(k)     + c_{\Lambda} (\nabla_{\Lambda^i} D^i({\lambda}^i(k),{\gamma}^i(k),{\Lambda}^i(k))),
\end{aligned}
\end{equation}
where $c_{.}$-s are the corresponding step-sizes and $\gamma'^i$ is a local variable. Afterward, the local copies of the global variable ($\gamma^i$-s) are derived by employing the consensus-based gradient ascent method~\cite{ref:DistCens}:
\begin{align}\label{eq:full_dist_3}
\gamma^i(k+1)=\sum_{j=1}^{n_d} \Big(\mathbf{W}^\Phi\Big)_{ij} \gamma'^j(k) ,
\end{align}
where $\mathbf{W}=\mathbf{I}-\epsilon \mathbf{L}(G_D)$, with $\mathbf{L}(G_D)$ the Laplacian matrix of $G_D$ and $\epsilon \in (0,1)$, and $\Phi\in \mathbb{N}$ denotes the number of performed consensus iterations among the adjacent DCs. In this method, the adjacent DCs perform $\Phi$ consensus iteration with local exchange of $\gamma'$-s before updating $\gamma$. The pseudo-code of the CDGA algorithm is given in Algorithm~\ref{alg:fulldist}. Since the solution is found in the continuous space, similar to Section~\ref{sec:Small}, the last stage of the algorithm is obtaining the solution in the feasible set of allocations. This step requires a centralized processor with the knowledge of the feasible solutions. Nevertheless, as compared to the centralized approach (Section~\ref{sec:Small}), the centralized processor is no longer in charge of deriving the optimal allocations for each graph job.

 \begin{algorithm}[t]
 {\footnotesize
 	\caption{CDGA: Consensus-based distributed graph job allocation}\label{alg:fulldist}
 	\SetKwFunction{Union}{Union}\SetKwFunction{FindCompress}{FindCompress}
 	\SetKwInOut{Input}{input}\SetKwInOut{Output}{output}
 	\Input{Convergence criteria $0<\upsilon<<1$, maximum number of iterations $K$.}
    At each DC $d^i\in\mathcal{D}$, choose an arbitrary initial value for $\lambda^i(1),\gamma^i(1),\Lambda^i(1)$.\\
    \For{$k=1$ to $K$}{
    At each DC $d^i\in\mathcal{D}$, derive the values of $\lambda^i,\gamma'^i,\Lambda^i$ for the next iteration (k+1) using Eq.~\eqref{eq:full_dist_2}. \label{lst:line:grad}\\
    At each DC $d^i\in\mathcal{D}$, update the value of $\gamma^i$ using Eq.~\eqref{eq:full_dist_3}.\\
    \If{$|\gamma^i(k+1)-\gamma^i(k)|\leq \upsilon$ and $|\Lambda^i(k+1)-\Lambda^i(k)|\leq \upsilon$ and $|\lambda^i(k+1)-\lambda^i(k)|\leq \upsilon$ and $|\gamma^i(k+1)-\gamma^j(k)|\leq \upsilon, \;1\leq  i\neq j \leq n_d$}{
    Go to line~\ref{lst:line:con}.
    }
    }
    Derive the convex relaxed solution described in Eq.~\eqref{eq:ms}.\label{lst:line:con}\\
    Derive the allocation using Eq.~\eqref{eq:weightedmeanSquare}.
    }
 \end{algorithm}
%%%%%%%%%%%%%%%%%%%%%%%%%%%%%%%%%%%%%%%%%%%%%%%%%%%%%%%%%%%%%%%%%%%%%%%%%%%%%5
%%%%%%%%%%%%%%%%%%%%%%%%%%%%%%%%%%%%%%%%%%%%%%%%%%%%%%%%%%%%%%%%%%%%%%%%%%%%%5
%%%%%%%%%%%%%%%%%%%%%%%%%%%%%%%%%%%%%%%%%%%%%%%%%%%%%%%%%%%%%%%%%%%%%%%%%%%%%5
%%%%%%%%%%%%%%%%%%%%%%%%%%%%%%%%%%%%%%%%%%%%%%%%%%%%%%%%%%%%%%%%%%%%%%%%%%%%%5
\vspace{-3mm}
\section{Graph Job Allocation in Large-Scale {\GC}s: Decentralized Approach using Cloud Crawling and PAs' Computing Resources}\label{sec:Large}
\noindent Large-scale {\GC}s consist of an enormous number of PAs and DCs. This fact imposes three challenges for graph job allocation: i) The CDGA algorithm developed above  becomes infeasible. In particular, excessive computational burden will be incurred on the DCs due to the large number of arriving jobs. Also, CDGA in large-scale {\GC}s will incur a long delay (e.g., a GDCN with $100$ DCs involves $300$ Lagrangian multipliers and requires hundreds of iterations for convergence), which may render the final solution less effective for the current state of the network. Moreover, continuous communication between the DCs imposes a considerable congestion over the communication links. ii) So far, the inherent assumption in our study is a known set of feasible allocations for the graph jobs. This requires solving the NP-complete problem of sub-graph isomorphism between the graph jobs and the large-scale aggregated network graph, which may take a long time. iii) Even for a given graph job, the size of the feasible allocation set becomes prohibitively large in a large-scale network. For instance, in a fully-connected network of $100$ DCs, each with $10$ slots, the number of feasible allocations for a simple triangle graph job is ${1000\choose 3}\sim 166\times 10^6$. These concerns motivate us to develop \textit{cloud crawlers}, based on which we address the mentioned challenges through a decentralized framework. Here, we use the term ``crawler" to describe the movement between adjacent DCs. This may bear a resemblance to the term \textit{web crawler}. Nevertheless, the cloud crawlers introduced here are fundamentally different from conventional web crawlers (e.g., \cite{ref:crawler1,ref:crawler2,ref:crawler3}). Our cloud crawlers aim to extract suitable sub-graphs from {\GC}s for specified graph job structures when traversing the network, while web crawlers are mainly developed to extract information from Internet URLs by looking for keywords and related documents. 
 \begin{algorithm}[htb!]
 \footnotesize{
 	\caption{Cloud crawling}\label{alg:CloudCrawler}
 	\SetKwFunction{Union}{Union}\SetKwFunction{FindCompress}{FindCompress}
 	\SetKwInOut{Input}{input}\SetKwInOut{Output}{output}
 	\Input{Initial server $d^i\in \mathcal{D}$, $Gjob_j$, the center node $v_c$ and its maximum shortest distance $D$ to the nodes of the graph, size of the suggested strategies $|\mathcal{SA}_j|$.} 
    Initialize a BST ($BST$), $IA$ as a list of list of lists,  $\mathcal{VISITED}=\{\}$, and vector $Feas\_A$ with length $D+1$.\\
    $\mathcal{VISITED}= \mathcal{VISITED} \cup d^i$\\
    $Feas\_A[1]=1$\\
    \For{$r=1$ to $D$}{\label{algLine:1}
    $Feas\_A[r+1]=Feas\_A[r]+|N^{r}_{v_c}|$\\
    } \label{algLine:2}
    Observer the current pdf of the load of the server $f_{\tilde{L}^i}$ \label{lst:line:crawl}\\
     Initialize $IA\_temp$ as a list of list of lists.\\
    \%Completing the incomplete allocations using the slots of current DC:\\
    \For{$r=1$ to $len(IA)$}{\label{algLine:thisDC3}
    $Last\_Alloc= IA[r,len(IA[r])]$\%Obtain the last allocation done for each incomplete allocation. This is a list of $4$ elements (see line \ref{alg:riid}) \\
    $AN=|\mathcal{V}_j|-Last\_Alloc[4]$ \% \#assigned nodes of the job \\
    $j=find(Feas\_A==AN)+1$ \%Next neighborhood that needs to be assigned\\
    $SA=0$\\
	\While{$j\leq D+1$}{
    $SA=SA+|N^{j}_{v_c}| $\%\#used slots from the current DC\\
    \If{$SA\leq |\mathcal{S}^i|$}{
    $p=E\left\{\tilde{\pi}^i\left(\tilde{L}^i+SA\right)\right\}$\label{algPower}\\
    $LL\_temp=IA[r]$ \%Initialize a temporary list\\
    $LL\_temp.append([d^i,SA,p,Last\_Alloc[4]-SA])$\\
    \If{$j = D+1$}{
    \%Add completed allocations to the BST\\
    \hspace{-3mm}$Tot\_p = Find\_Tot\_Cost(LL\_temp)$ \%Algorithm~\ref{alg:power_tot}\\
    \hspace{-3mm}$Alloc=Create\_Alloc(LL\_temp)$\%Algorithm~\ref{alg:cAlloc}\\
    \hspace{-3mm}$BST=BST\_Add(BST,\underbrace{Tot\_p}_{key},\underbrace{Alloc}_{value},|\mathcal{SA}_j|)$\\
    \vspace{-3mm}
    \%Algorithm~\ref{alg:BSTadd}
    }\Else{
    \hspace{-3mm}$IA\_temp.append(LL\_temp)$
    }
    }
        $j=j+1$
    }
    }
    $IA = IA\_temp$\\ \label{algLine:thisDC4}
    \%Assigning the nodes to the current DC:\\
    \For{$r\;\; in\;\; Feas\_A$}{ \label{algLine:thisDC1}
    \If{$r \leq |\mathcal{S}^i|$}{
    $p=E\left\{\tilde{\pi}^i\left(\tilde{L}^i+r\right)\right\}$\\
    $RS=|\mathcal{V}_j|-r$ \%Number of unassigned nodes of the job\\
    \If{$RS=0$}{
    \hspace{-3.8mm}\%The allocation corresponding to assigning all the nodes to the current server is added to the BST\\
    $BST\_Add(BST,p,[d^i,r],|\mathcal{SA}_j|)$ \%Algorithm~\ref{alg:BSTadd}
    }\Else{
    \hspace{-3.8mm}\%A new incomplete allocation is added to $IA$ as a list of list\\
     $IA.append([[d^i,r,p,RS]])$\label{alg:riid}
    }
    }
    }\label{algLine:thisDC2}
    \If{All the adjacent DC are in the set $\mathcal{VISITED}$}{\label{algLine:nextDC}
    \hspace{-3mm} Initialize a new $IA$ and randomly choose one adjacent DC~$d^k$\\
     \hspace{-3mm} $\mathcal{VISITED}=\{\}$
    }\Else{
    Randomly choose one adjacent DC $d^k$\\
    }
    $\mathcal{VISITED}=\mathcal{VISITED} \cup \{d^k\}$\\
    $i=k$\\
    crawl to $d^i$ and 
    go to line \ref{lst:line:crawl}\label{algLine:nextDC2} \\
       }
 \end{algorithm}

\subsection{Strategy Suggestion Using Cloud Crawling}
We introduce a cloud crawler (CCR) which carries a collection of structured information traveling between adjacent DCs. It probes the connectivity among the DCs and status of them (power usage, load distribution, etc.), based on which it provides a set of suggested allocations for the graph jobs. For a faster network coverage, multiple CCRs for each type of graph job can be assumed. Information gleaned by the CCRs can be shared with the PAs who act as mediators between the {\GC} and customers using two mechanisms: i) the CCR shares them with a central database, which PAs have access to, on a regular basis; ii) the CCR shares them with DCs as it passes through them and the DCs update the connected PAs accordingly. 
The goal of a CCR is to find ``potentially good" feasible allocations to fulfill a graph job's requirements considering the network status. We consider a potentially good feasible allocation as a sub-graph in the aggregated network graph which is isomorphic to the considered graph job leading to a low cost of execution. In the following, we first prove a theorem, based on which we provide a corollary aiming to describe a fast decentralized approach to solve the sub-graph isomorphism problem in large-scale {\GC}s.  
 
\begin{definition}\label{def:1}
Two graphs $G$ and $G'$ with vertex sets $\mathcal{V}$ and $\mathcal{V}'$ are called isomorphic if there exists an isomorphism (bijection mapping) $g: V \rightarrow V'$ such that any two nodes, $\Scale[.97]{a,b\in \mathcal{V}}$, are adjacent in $G$ if and only if $\Scale[.97]{g(a),g(b)\in \mathcal{V'}}$ are adjacent in $G'$.
\end{definition}
  \begin{algorithm}[h]
 {\footnotesize
 	\caption{$Find\_Tot\_Cost$}\label{alg:power_tot}
 	\SetKwFunction{Union}{Union}\SetKwFunction{FindCompress}{FindCompress}
 	\SetKwInOut{Input}{input}\SetKwInOut{Output}{output}
    \Input{A list $LL$}
       \Output{The total cost $C$.}
    $C=0$\\
    $j=0$\\
    \While{LL[j]!=null}{
    \hspace{-3mm}$C=C+LL[j][3]$ \% Sum the incurred costs on all the DCs involved\\
    \hspace{-3mm}$j=j+1$
    }
 return C
 }
 \end{algorithm}
  \begin{algorithm}[h]
 {\footnotesize
 	\caption{$Create\_Alloc$}\label{alg:cAlloc}
 	\SetKwFunction{Union}{Union}\SetKwFunction{FindCompress}{FindCompress}
 	\SetKwInOut{Input}{input}\SetKwInOut{Output}{output}
    \Input{A list $LL$}
    \Output{Allocation strategy $S$}
    Initialize $S$ as a list of lists\\
    $j=0$\\
    \While{LL[j]!=null}{
    \hspace{-3mm}$S=S.append([LL[j][1],LL[j][2]])$ \%The DC's index and its number of used slots\\
    \hspace{-3mm}$j=j+1$\\
    }
 return S
 }
 \end{algorithm}
\begin{algorithm}[h]
 {\footnotesize
 	\caption{$BST\_Add$}\label{alg:BSTadd}
 	\SetKwFunction{Union}{Union}\SetKwFunction{FindCompress}{FindCompress}
 	\SetKwInOut{Input}{input}\SetKwInOut{Output}{output}
    \Input{A binary search tree $BST$, a $key$ and a $value$, the desired size of the suggested strategy set $|\mathcal{SA}_j|$}
    \Output{A binary search tree $BST$}
   \If{BST.length $< |\mathcal{SA}_j|$}{
   BST=BST.Insert($key$,$value$)
   }\ElseIf{$key < $ BST.get\_max().key }{
    BST=BST.Delete(BST.get\_max())\\
    BST=BST.Insert($key$,$value$)
   } 
 return BST
 }
 \end{algorithm}

\begin{theorem}\label{th:graphIso}
Consider graphs $G$ and $H$ with vertex sets $\mathcal{V}_G$ and $\mathcal{V}_H$, respectively, where $|\mathcal{V}_G| \leq |\mathcal{V}_H|$. Assume that $H$ can be partitioned into multiple complete sub-graphs $h_0,...,h_N$, $N\geq1$, with vertex sets $\mathcal{V}_{h_0},,...,\mathcal{V}_{h_N}$, where $\cup_{i=0}^{N} \mathcal{V}_{h_i}=\mathcal{V}_H$, and all the nodes in each pair of sub-graphs with consecutive indices are connected to each other. Consider node $v \in \mathcal{V}_G$ and let $D\leq N$ denote the length of the longest shortest path between $v$ and nodes in $\mathcal{V}_G$. Define $\mathcal{N}^k_{v} \triangleq \{\hat{v}\in \mathcal{V}_G : SP(\hat{v},v)=k\}$, $\mathcal{N}^0_{v}=\{v\}$, where $SP(.,.)$ denotes the length of the shortest path between the two input nodes.
Let $\{i_j\}_{j=0}^{D}$ be a sequence of integer numbers that satisfy the following conditions:
\begin{align}
&\hspace{-1mm}0\leq i_0,i_1,i_2,\cdots,i_{D} \leq D+1,\label{con:1}\\
&\hspace{-1mm}\sum_{l=0}^{D} i_l=D+1, \\
&\hspace{-1mm}i_{j}=0 \Rightarrow i_{j+1}=0, \;\;\forall j\in\{0,\cdots,D-1 \}.\label{con:3}
\end{align}
For such a sequence $\{i_j\}_{j=0}^{D}$, there is at least an isomorphic sub-graph to $G$, called $G'$, in $H$ with the corresponding isomorphism mapping $g$, for which at least one of the nodes of $G'$, $v'=g(v)$, belongs to $h_0$, if the following set of conditions is satisfied:
\vspace{-1mm}
\begin{equation}\label{eq:proover}
\begin{aligned}
\begin{cases}
|\mathcal{V}_{h_0}|\geq \sum_{i=0}^{i_0-1}|\mathcal{N}^{i}_{{v}}|,\\
|\mathcal{V}_{h_1}|\geq \sum_{i=1}^{i_1} \mathbf{1}_{\{i_1 \geq 1\}}|\mathcal{N}^{i+i_0-1}_{{v}}|, \\
|\mathcal{V}_{h_2}|\geq \sum_{i=1}^{i_2} \mathbf{1}_{\{i_2 \geq 1\}}|\mathcal{N}^{i+i_0+i_1-1}_{{v}}|, \\
\vdots\\
|\mathcal{V}_{h_{D}}|\geq \sum_{i=1}^{i_D}\mathbf{1}_{\{i_D \geq 1\}}|\mathcal{N}^{i+\sum_{l=0}^{D-1} i_{l} -1}_{{v}}|.
\end{cases}
\end{aligned}
\end{equation}
\end{theorem}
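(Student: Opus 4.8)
The plan is to build the subgraph isomorphism $g$ \emph{explicitly}, using the sequence $\{i_j\}_{j=0}^{D}$ as a recipe for distributing the breadth-first ``shells'' of $G$ around $v$ among the complete sub-graphs $h_0,\ldots,h_D$ of $H$, and then to check that $g$ is a well-defined injection preserving adjacency. First I would isolate the single structural fact about $G$ that makes everything work: since $\mathcal{N}^k_{v}$ collects the nodes at shortest-path distance exactly $k$ from $v$ and $D$ is the eccentricity of $v$, the sets $\mathcal{N}^0_{v},\ldots,\mathcal{N}^D_{v}$ partition $\mathcal{V}_G$ (every node of the connected $G$ lands in exactly one shell), and crucially every edge $(a,b)\in\mathcal{E}_G$ joins nodes whose shell indices differ by at most one, the standard fact that shortest-path distances to a fixed vertex change by at most $1$ across an edge.

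Next I would convert $\{i_j\}$ into a shell-to-sub-graph assignment. Using $\sum_{l=0}^{D} i_l = D+1$ together with \eqref{con:1}, the $i_j$'s carve the $D+1$ indices $\{0,\ldots,D\}$ into consecutive blocks: shell $k$ is routed to $h_{j}$, where $j$ is the unique index with $\sum_{l=0}^{j-1} i_l \le k < \sum_{l=0}^{j} i_l$. A short bookkeeping check shows that the right-hand side of the $j$-th line of \eqref{eq:proover} is precisely $\sum_{k}|\mathcal{N}^k_{v}|$ taken over the shells routed to $h_j$ (with the indicator $\mathbf{1}_{\{i_j\ge1\}}$ zeroing out empty blocks), so each inequality in \eqref{eq:proover} states exactly that $|\mathcal{V}_{h_j}|$ is at least the number of nodes assigned to it. Hence I can inject the nodes of those shells into $\mathcal{V}_{h_j}$ by any choice of distinct targets; since the $h_j$ partition $\mathcal{V}_H$, injections chosen independently per block remain jointly injective, so $g:\mathcal{V}_G \to \mathcal{V}_H$ is a well-defined injection. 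Because shell $0=\{v\}$ always sits in block $0$ (note $i_0\ge1$ is forced by $\sum_l i_l = D+1$ and \eqref{con:3}), I immediately get $v'=g(v)\in h_0$ as the theorem requires.

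The adjacency check is where the hypotheses earn their keep, and it is the step I would treat most carefully. Take $(a,b)\in\mathcal{E}_G$ with $a\in\mathcal{N}^{k_1}_{v}$, $b\in\mathcal{N}^{k_2}_{v}$, so $|k_1-k_2|\le 1$. The block assignment $k\mapsto j(k)$ is monotone and covers $\{0,\ldots,D\}$ by consecutive integer intervals, so $j(k_1)$ and $j(k_2)$ differ by at most one: if equal, $g(a),g(b)$ lie in one complete sub-graph and are adjacent; if they differ by one, $g(a),g(b)$ lie in consecutive sub-graphs, which are fully connected by hypothesis, hence again adjacent. This is exactly the point where \eqref{con:3} is indispensable, and I expect it to be the main obstacle to state cleanly: it forces the empty blocks ($i_j=0$) to form a \emph{suffix}, so the occupied sub-graphs are $h_0,\ldots,h_{j^*}$ with consecutive indices and no skipped sub-graph sits between two used shells; without it, two adjacent shells could be routed to non-consecutive $h_j$'s, whose vertices need not be adjacent in $H$, and edge preservation would fail. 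Finally I would set $G'$ to be the subgraph of $H$ on vertex set $g(\mathcal{V}_G)$ with edge set $\{(g(a),g(b)):(a,b)\in\mathcal{E}_G\}$; since $g$ is injective and carries exactly the edges of $G$ to edges present in $H$, $g$ is an isomorphism from $G$ onto $G'\subseteq H$ in the sense of Definition~\ref{def:1}, which completes the argument.
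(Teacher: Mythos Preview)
Your proposal is correct and follows essentially the same route as the paper's own proof: both distribute the BFS shells $\mathcal{N}^0_v,\ldots,\mathcal{N}^D_v$ among $h_0,\ldots,h_D$ in consecutive blocks of sizes $i_0,\ldots,i_D$, use the capacity inequalities~\eqref{eq:proover} to guarantee an injection, and verify edge preservation via the fact that adjacent shells land in the same or consecutive (hence fully connected) $h_j$'s. Your write-up is in fact more explicit than the paper's sketch, particularly in spelling out the role of condition~\eqref{con:3} in preventing skipped sub-graphs and in making precise the block map $k\mapsto j(k)$.
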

\begin{proof}
 The key to prove this theorem is considering the following mapping between the nodes of $G$ and the sub-graphs in $H$:
\begin{equation}\label{eq:middproof}
   \hspace{-3.4mm} [v \rightarrow h_0, \;\mathcal{N}^1_{{v}}\rightarrow h_1,\; \mathcal{N}^2_{{v}}\rightarrow h_2, \cdots, \;\mathcal{N}^D_{{v}}\rightarrow h_D].
\end{equation}
Under this mapping, the mapped nodes form an isomorphic graph to $G$ since the connection between all the adjacent nodes in $G$ is met in $H$. That is because they are either placed at the same (fully-connected) $h_i$, $0\leq i \leq N$ or in (fully-connected) adjacent $h_i$-s, $0\leq i \leq N$. With a similar justification, it can be proved that concatenation of the mapped nodes to the adjacent $h_i$-s, $0\leq i \leq N$, in Eq.~\eqref{eq:middproof} preserves the isomorphic property. For instance, all the following mappings form isomorphic graphs to $G$ in $H$:
\begin{align}
        &\hspace{-2mm}[v \rightarrow h_0, \;\mathcal{N}^1_{{v}}\rightarrow h_1,\; \cdots,\mathcal{N}^{D-2}_{{v}}\rightarrow h_{D-2}, \nonumber \\
        &\hspace{-2mm}\;\mathcal{N}^{D-1}_{{v}} \cup \mathcal{N}^D_{{v}}\rightarrow h_{D-1}, \{\}\rightarrow h_{D}],\\
        &\hspace{-2mm}[v \rightarrow h_0, \;\mathcal{N}^1_{{v}}\rightarrow h_1,\; \cdots,\mathcal{N}^{D-3}_{{v}}\rightarrow h_{D-3}, \nonumber \\
        &\hspace{-2mm}\;\mathcal{N}^{D-2}_{{v}} \cup \mathcal{N}^{D-1}_{{v}} \cup \mathcal{N}^{D}_{{v}}\hspace{-1mm}\rightarrow \hspace{-1mm} h_{D-2}, \{\}\hspace{-1mm}\rightarrow h_{D-1}, \{\}\hspace{-1mm}\rightarrow h_{D}],\\
        &\hspace{-2mm}[v \cup \mathcal{N}^1_{{v}} \rightarrow h_0, \;\mathcal{N}^2_{{v}}\rightarrow h_1,\; \cdots,\mathcal{N}^{D-1}_{{v}}\rightarrow h_{D-2}, \nonumber \\
        &\hspace{-2mm}\;\mathcal{N}^{D}_{{v}} \rightarrow h_{D-1}, \{\}\rightarrow h_{D}],\\
        &\hspace{-2mm}[{v} \rightarrow h_0, \;\mathcal{N}^1_{{v}} \cup \mathcal{N}^2_{{v}} \cup \mathcal{N}^3_{{v}}\rightarrow h_1, \nonumber \\
        &\hspace{-2mm}\mathcal{N}^4_{{v}} \rightarrow h_2,  \cdots,\;\mathcal{N}^{D}_{{v}}\rightarrow h_{D-2}, \hspace{-2mm}\;\{\} \rightarrow h_{D-1}, \{\}\rightarrow h_{D}].
\end{align}
It can be seen that conditions stated in Eq.~\eqref{con:1}-\eqref{con:3} denote the feasible concatenation strategies, where each $i_j$ denotes the number of neighborhoods mapped to $h_j$, $0\leq j \leq D$. Also, Eq.~\eqref{eq:proover} ensures the feasibility of the corresponding mappings.  
\end{proof}

\begin{corollary}\label{cor:1}
For $Gjob_j$, assume a CCR located at DC $d^{j_0}$ allocating at least one node of $Gjob_j$, $v_c\in \mathcal{V}_j$, to one slot at $d^{j_0}$, where the length of longest shortest path between $v_c$ and nodes in $\mathcal{V}_j$ is $D$. Assume that the CCR's near future path can be represented as $d^{j_0} \rightarrow d^{j_1} \cdots \rightarrow d^{j_{D}} $, where $j_i\neq j_k,\; \forall i\neq k$. Considering $d^{j_i}$ as $h_i$ in Theorem~\ref{th:graphIso}, for each realization of the sequence $\{i_j\}_{j=0}^{D}$ satisfying Eq.~\eqref{con:1}-\eqref{eq:proover}, the following described allocation $\mathbf{M}_j=[m^1_j,m^2_j,\cdots,m^{n_d}_j]$ is feasible and is isomorphic to $Gjob_j$.
\begin{equation}
\begin{aligned}
m^k_j=\begin{cases}
\sum_{i=0}^{i_0-1}|\mathcal{N}^{i}_{v_{c}}| &k={j_0},\\
\sum_{i=1}^{i_1} \mathbf{1}_{\{i_1 \geq 1\}}|\mathcal{N}^{i+i_0-1}_{v_{c}}|&k=j_1, \\
\sum_{i=1}^{i_2} \mathbf{1}_{\{i_2 \geq 1\}}|\mathcal{N}^{i+i_0+i_1-1}_{v_{c}}|&k=j_2, \\
\vdots\\
\sum_{i=1}^{i_D}\mathbf{1}_{\{i_D \geq 1\}}|\mathcal{N}^{i+\sum_{l=0}^{D-1} i_{l} -1}_{v_{c}}| &k=j_D,\\
0 \;& Otherwise.
\end{cases}
\end{aligned}
\end{equation}
\end{corollary}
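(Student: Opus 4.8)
The plan is to obtain this result as a direct specialization of Theorem~\ref{th:graphIso}, the only genuine work being the verification that the crawler's neighborhood in the aggregated network graph meets the theorem's hypotheses and that the induced per-DC node counts match the claimed allocation. I would set $G=Gjob_j$ and take $H$ to be the sub-graph of the aggregated network graph $(\mathcal{V}_S,\mathcal{E}_S)$ induced by the slots of the DCs on the crawler's path, i.e. $\mathcal{V}_H=\bigcup_{i=0}^{D}\mathcal{S}^{j_i}$. I would then identify the center node $v$ of the theorem with $v_c$ and the complete sub-graphs $h_0,\dots,h_N$ with the slot sets $\mathcal{S}^{j_0},\dots,\mathcal{S}^{j_D}$, so that $N=D$ and the hypothesis $D\le N$ holds with equality.

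Next I would check the two structural hypotheses of Theorem~\ref{th:graphIso}. Each $h_i=\mathcal{S}^{j_i}$ is a complete sub-graph because, by the system model, all slots belonging to the same DC are mutually adjacent. The requirement that all nodes of consecutive sub-graphs $h_i$ and $h_{i+1}$ be connected follows from the fact that $d^{j_i}\to d^{j_{i+1}}$ is a hop on the crawler's path, hence $(d^{j_i},d^{j_{i+1}})\in\mathcal{E}_D$; by the adjacency rule defining $\mathcal{E}_S$, every slot of $d^{j_i}$ is then adjacent to every slot of $d^{j_{i+1}}$. The assumption $j_i\neq j_k$ for $i\neq k$ guarantees that the slot sets are pairwise disjoint, so $\{h_i\}$ is a genuine partition of $\mathcal{V}_H$ with $\bigcup_{i=0}^{D}\mathcal{V}_{h_i}=\mathcal{V}_H$, exactly as the theorem demands.

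With the hypotheses in place, I would invoke Theorem~\ref{th:graphIso} directly: for every sequence $\{i_j\}_{j=0}^{D}$ satisfying \eqref{con:1}--\eqref{eq:proover}, the theorem yields a sub-graph $G'\subseteq H$ isomorphic to $Gjob_j$ whose center $g(v_c)$ lies in $h_0=\mathcal{S}^{j_0}$, i.e. $v_c$ is placed on a slot of $d^{j_0}$. The mapping used in that proof, \eqref{eq:middproof} together with its admissible concatenations, assigns the shells $\mathcal{N}^{k}_{v_c}$ to the DCs along the path; counting how many job nodes land in each DC then reads off, line by line, precisely the right-hand sides of \eqref{eq:proover}. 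This gives $m^{j_i}_j$ equal to the stated cardinality sums and $m^k_j=0$ for any DC off the path, which is exactly the allocation vector $\mathbf{M}_j$ claimed.

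Finally I would argue feasibility on both fronts. Because $G'$ is isomorphic to $Gjob_j$ under edge preservation, any two job nodes joined in $\mathcal{E}_j$ map to adjacent slots in $\mathcal{E}_S$, so $\mathbf{M}_j$ arises from a feasible mapping $f_j:\mathcal{V}_j\mapsto\mathcal{V}_S$ honoring the job's communication constraints. Slot-capacity feasibility comes for free: each inequality of \eqref{eq:proover} reads $|\mathcal{V}_{h_i}|=|\mathcal{S}^{j_i}|\ge m^{j_i}_j$, so no DC is asked for more slots than it has available. I expect the only delicate point to be the bookkeeping of the third paragraph, namely confirming that the neighborhood-to-DC counts coincide term for term with \eqref{eq:proover} for every concatenation pattern encoded by $\{i_j\}$; once that correspondence is pinned down, the corollary is immediate, since all the substantive graph-theoretic content is already carried by Theorem~\ref{th:graphIso}.
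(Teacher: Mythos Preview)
Your proposal is correct and matches the paper's intent exactly: the corollary is stated without a separate proof in the paper because it is a direct specialization of Theorem~\ref{th:graphIso}, with the DCs $d^{j_0},\dots,d^{j_D}$ on the crawler's path playing the role of the complete sub-graphs $h_0,\dots,h_D$. Your verification that each $\mathcal{S}^{j_i}$ is complete and that consecutive slot sets are fully interconnected via $\mathcal{E}_D$, together with reading off the per-DC counts from the right-hand sides of~\eqref{eq:proover}, is precisely the intended argument.
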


Using our method described in the above corollary, it can be verified that the complexity of obtaining an isomorphic sub-graph to a graph job for a CCR becomes $O(D)$, where $D$ is the diameter of the graph job. Henceforth, we recall $v_c$ defined in Corollary~\ref{cor:1} as the center node, which can be chosen arbitrarily from the graph job's nodes. The pseudo-code of our algorithm implemented in a CCR is given in Algorithm~\ref{alg:CloudCrawler}. We use the binary search tree (BST) data structure~\cite{ref:DS} to structurize the carrying suggested strategies. To handle the large number of feasible allocations, we limit the capability of a CCR in carrying potentially good strategies (size of the BST) to a finite number $|\mathcal{SA}_j|$ for $Gjob_j\in\mathcal{J}$. Some important parts of Algorithm~\ref{alg:CloudCrawler} are further illustrated in the following. 

\textbf{A) Initialization:} A CCR is initialized at a DC for a certain graph job, $Gjob_j\in\mathcal{J}$, and a specified number of suggested strategies ($|\mathcal{SA}_j|$) to be carried.\footnote{Note that using a simple extension of this algorithm, a CCR can handle the extraction of suggested strategies for multiple graph jobs at the same time.} Each CCR carries a BST, a list~\cite{list} of incomplete allocations ($IA$) and a set of visited neighbors ($Visited$) (can be implemented as a list). In Fig.~\ref{fig:CCR}-a, topology of a graph job is shown along with three DCs where each square denotes a slot in a DC. The CCR is initialized at $d^1$ traversing the path $d^1\rightarrow d^2\rightarrow d^3$.

\textbf{B) Determining the Graph Job Topology Constraints (lines:~\ref{algLine:1}-\ref{algLine:2}):} For a given center node of a graph job, i.e., $v_c$ in Corollary~\ref{cor:1}, the algorithm calculates the feasible number of nodes allocated to DCs according to Corollary~\ref{cor:1}. In Fig.~\ref{fig:CCR}-a, the center node is denoted by $v_c$ and different set of neighbors located in various shortest paths to $v_c$ are demonstrated.  

\textbf{C) Allocation Initialization and Completion (lines: \ref{algLine:thisDC3}-\ref{algLine:thisDC2}):} 
\begin{figure*}[t]
	\centering
	\includegraphics[width=1\textwidth]{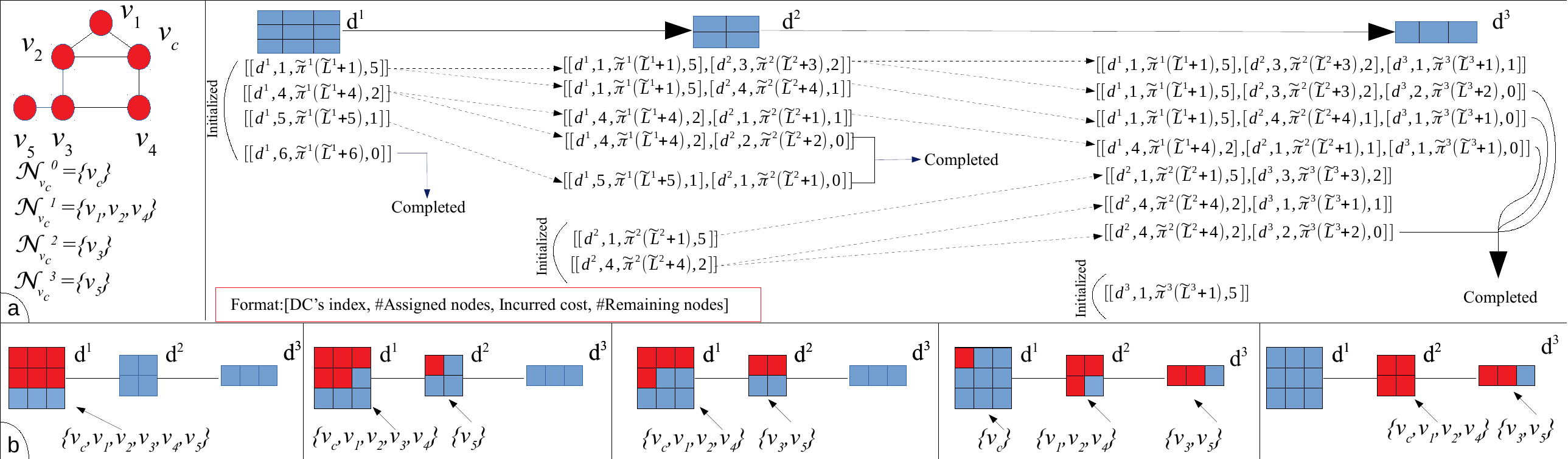}
	\caption{a: The graph job topology and the neighboring nodes to the center node (left); three DCs along with the carried incomplete and complete allocations of the CCR upon arriving at each DC (right). b: Some examples of completed allocations.}
	\label{fig:CCR}
	\vspace{-4mm}
\end{figure*}
 According to Corollary~\ref{cor:1}, the crawler attempts to complete the incomplete allocations in $IA$ by accommodating the remaining nodes of the graph job to the current DC (lines: \ref{algLine:thisDC3}-\ref{algLine:thisDC4}). During this process, if the remaining number of unassigned nodes of the graph job becomes zero, the corresponding allocation is added to the BST considering its incurred cost. The rest of the allocations are added to the $IA$. Also, the allocations are initialized upon arriving at each DC using Corollary~\ref{cor:1} (lines: \ref{algLine:thisDC1}-\ref{algLine:thisDC2}). In Fig.~\ref{fig:CCR}-a, the initialized allocations are depicted underneath $d^1$. Also, the updated set of incomplete allocations and completed allocations during the movement of the CCR are depicted underneath $d^2$ and $d^3$. Also, some of the completed allocations are depicted in Fig.~\ref{fig:CCR}-b for a better understanding.

\textbf{D) Traversing the {\GC} (lines: \ref{algLine:nextDC}-\ref{algLine:nextDC2}):} The CCR examines the adjacent DCs of its current location. If there are multiple non-visited neighbor DCs, the CCR chooses its next destination randomly among them. However, if all the neighbor DCs are visited, the CCR clears the $visited$ set, and chooses its next destination at random. This process is designed to avoid re-visiting the previously visited DCs or trapping at a DC in which all of its neighbor DCs are visited.

\begin{remark}
After the size of the BST reaches the predefined length ($|\mathcal{SA}_j|$), a new completed strategy is added to the BST if it possesses a lower incurred cost as compared to the strategy with the maximum incurred cost in the BST, and the latter is deleted subsequently.
\end{remark}
\begin{remark}
It is assumed that each DC has a probabilistic prediction for its near future load distribution. Hence, the crawler obtains the expected cost of allocation in line \ref{algPower}, e.g., $E\{\tilde{\pi}^i(\tilde{L}^i+m)\}= \sum_{j=0}^{|\mathcal{S}^i|-m} \pi(j+m) f_{\tilde{L}^i}(\tilde{L}^i=j)$ when $m$ slots of DC $d^i$ are taken, where $f_{\tilde{L}^i}$ is the probability mass function of the predicted load of $d^i$ and $\pi(j+m)$ is the incurred cost stated in Eq.~\eqref{eq:powerPrice} with $L^i+m^i_j$ replaced with $j+m$.
\end{remark}
\begin{remark}
In the BST considered (see Algorithm~\ref{alg:CloudCrawler}), each node has two attributes: $``key"$ and $``value"$, where $key$ is a real number and $value$ is a list. The functions $getmax()$, $Delete()$, and $Insert()$ are assumed to be known, for which sample implementation can be found in~\cite{ref:DS}. Also in Algorithm~\ref{alg:CloudCrawler}, the function $len()$ returns the length of the input argument. If the input is a list, it returns the number of elements; if the input is a list of lists, it returns the number of lists inside the outer-list, etc. Moreover, in Algorithm~\ref{alg:BSTadd}, the $``length"$ attribute indicates the number of nodes of the BST. 
\end{remark}
\begin{remark}
The BST is used for its unique characteristics. If the BST is balanced, (e.g., implemented as an AVL tree) this data structure enables deletion of the strategy with the maximum cost and insertion of a new strategy, which are both necessary in the CCR, in time complexity of $O(\log |\mathcal{SA}_j|)$. Moreover, a simple \textit{inorder traversal}, which can be done in $O(|\mathcal{SA}_j|)$, gives the suggested strategies in ascending order with respect to their incurred cost. 
\end{remark}
\begin{remark}
We designed CCRs mainly for allocation of graph jobs in large-scale GDCNs. However, they could also be implemented in small- and medium-scale GDCNs. In those cases, if the CCR continuously explores the network and the power consumption of DCs changes smoothly over time, the allocation cost of the best strategy in the BST of the CCR would be similar to those of the solutions obtained in Section~\ref{sec:Small} and ~\ref{sec:Medium}. Note that,
the analytical solutions proposed for small- and medium-scale GDCNs are guaranteed to find the optimal solution of $\mathcal{P}3$, which, due to the limitations discussed
at the beginning of this section, are not applicable in large-scale GDCNs.
However, cloud crawling can be viewed as an empirical approach, which not only
provides a distributed solution, but also offers additional benefit such as extraction of isomorphic sub-graphs to a given graph job.
\end{remark}
  Fig.~\ref{diag:sysmodel} depicts a sample CCR traversing over the network, where its corresponding information is shown in the bottom left of the figure. In this figure, a crawler is considered attempting to assign a graph job with $7$ nodes to the network. It is assumed that given the center node $v_c$, we have: $\mathcal{N}^0_{v_{c}}=1$, $\mathcal{N}^1_{v_{c}}=2$, $\mathcal{N}^2_{v_{c}}=2$, $\mathcal{N}^3_{v_{c}}=1$, and $\mathcal{N}^4 _{v_{c}}=1$. In the depicted BST, each suggested strategy is a list of lists, each of which consists of two elements: index of a DC and the number of slots utilized from that DC. 
  
  So far, PAs are provided with a pool of potentially good allocations using CCRs. In the following, we address suitable strategy selection approaches for PAs with respect to the pricing policy of the DCPs. 
    %%%%%%%%%%%%%%%%%%%%%%%%%%%%%%%%%%%%%%%%%%%%%%%%
    %%%%%%%%%%%%%%%%%%%%%%%%%%%%%%%%%%%%%%%%%%%%%%%%
    %%%%%%%%%%%%%%%%%%%%%%%%%%%%%%%%%%%%%%%%%%%%%%%%%

\subsection{Strategy Selection Under Fixed Pricing}\label{sec:regret}

 Due to the simplicity of implementation, fixed pricing is still a common approach to offer cloud services to customers. In this case, DCPs determine a constant price for utilizing each slot of their DC, which is chosen with respect to the expected load of the DC to guarantee a certain amount of profit. In this subsection and Subsection~\ref{sec:correquib}, it is assumed that PAs assign their graph jobs to the system in a sequential manner, where at each \textit{iteration} each PA assigns (at most) one graph job of each type (if it is requested by a customer) to the system.
\subsubsection{Problem Formulation}
We formulate the problem from the perspective of one PA since the utilization cost of DCs are assumed to be constant. For the $n^{th}$ arriving $Gjob_j$, the PA chooses an allocation $\mathbf{M}_{j,(n)}=[m^1_{j,(n)},\cdots,{m}^{n_d}_{j,(n)}]$ from the pool of the CCR's suggested allocations $\mathcal{SA}_j$. In this case, we define the utility function of a PA as:
\begin{align}\label{eq:util2}
&U_j(n)|_{\mathbf{M}_{j,(n)}}  \triangleq \rho_j -\chi_j \sum_{k=1}^{n_d} PC^k {m}^k_{j,(n)}
\nonumber \\ 
&-\phi_j  \sum_{k=1}^{n_d} \mathbf{1}_{\{{m}^k_{j,(n)}>0\}}+ \chi_j |\mathcal{V}_j|PC^{max}+\phi_j |\mathcal{V}_j|,
\end{align}
where $PC^.$ denotes the slot cost of the indexed DC and $PC^{max}$ is a constant larger than the price of all the slots in the system. In this expression, different preferences of PAs are governed by positive real constants $\phi_j$ and $\chi_j$, $\rho_j \in \mathbb{R}^+$ is the default reward of execution, the second term describes the payment, the third term describes the privacy preference of the PA, and the last two terms are added to ensure that the utility function is non-negative. In the third term, a large value of $\phi_j$ implies more tendency toward utilizing fewer DCs to execute the graph job. The normalized utility function can be derived as: $\tilde{U}_j(n)= U_j(n)/\big[\rho_j+\chi_j|\mathcal{V}_j|PC^{max} + \phi_j (|\mathcal{V}_j|-1)\big]$. 

In this context, each PA aims to maximize his utility by selecting the best sequence of allocations ${\widetilde{M}_j}^*$. Mathematically:
\begin{align}
&{\widetilde{M}_j}^*= \argmax_{\widetilde{M}_j=\{\mathbf{M}_{j,(n)}\}_{n=1}^{N_j}} \;\;\sum_{n=1}^{N_j} U_j(n) |_{\mathbf{M}_{j,(n)}} \nonumber \\
&{\textrm{s.t.} } \;\;\mathbf{M}_{j,(n)} \in \mathcal{SA}_j,\; \;\forall n\in \{1,\cdots,N_j\}.
\end{align}
 Due to accessibility constraints, a newly joined PA may not have complete information about the prices of all the slots. Also, DCPs may update the utilization costs periodically. Hence, initially there is a lack of knowledge about the DCs' prices on the PAs' side making conventional optimization techniques inapplicable. We tackle this problem by proposing an online learning algorithm partly inspired by the concept of \textit{regret}. This concept originates in the multi-armed bandit problem~\cite{ref:gambler}, where the
gambler aims to identify the best slot machine to play (best
strategy) at each round of his gambling while considering the
history of the rewards of the machines. Our algorithm is an advanced version of the original algorithm in~\cite{ref:gambler} tailored for the graph job allocation in {\GC}s. 

\subsubsection{Boosted Regret Minimization Assignment (BRMA) Algorithm}
By choosing a strategy from the set of suggested strategies of a CCR and observing the utility, one can get an estimate about the utility of the similar strategies targeting similar DCs. To address this, in our algorithm, we use the concept of \textit{k-means clustering}~\cite{ref:clustering} to partition the strategies into different groups with respect to their similarity. Let $\mathcal{C}=\{C_1,C_2,\cdots,C_{|\mathcal{C}|}\}$ denote the set of clusters obtained using the method of~\cite{ref:clustering}. We group consecutive iterations of our algorithm as a ``time-frame", according to which $\mathcal{T}=\{tf_1,tf_2,\cdots, tf_{\ceil{N/\Gamma}}\}$ denotes the set of time-frames, where $N$ is the number of iterations and $\Gamma$ is the time-frame length. In this case, iterations $1$ to $\Gamma$ belong to $tf_1$, iterations $\Gamma+1$ to $2\Gamma$ belong to $tf_2$, etc. Let $\mathcal{A}_{tf_k}$ denote the set of actions performed in the $k^{th}$ time-frame and $A_{k}$ denote the action executed at iteration $k$. For strategy $m\in \mathcal{SA}_j$, let $ \kappa^k_m=\sum_{n=(k-1)\Gamma+1}^{k\Gamma} \mathbf{1}_{\{A_{n}=m\}} $ denote the number of times the strategy is chosen during $tf_k$. 
The pseudo-code of our proposed algorithm is given in Algorithm~\ref{alg:regretmin}. The main differences between our proposed algorithm and the method in~\cite{ref:gambler} are as follows: i)  the concept of clustering is leveraged to group the analogous strategies; ii) a new weight update mechanism is proposed based on the concept of ``similarity", with which the weights of the unutilized strategies are estimated employing the utility of the chosen strategies; iii) the concept of time-frame is incorporated in our design (see Remark~\ref{remark:Grouping}). These approaches significantly improve the speed of convergence of the algorithm (see Section~\ref{sec:Simu}). In our algorithm, during $tf_n$, every time a PA needs to allocate $Gjob_j$, he chooses a strategy ($m\in \mathcal{SA}_j$) with probability:
\begin{align} \label{eq:prob}
\mathbf{p}_{j,(m)}(tf_n)&= (1-E) \frac{w_{j,(m)}(tf_n)}{\sum_{a \in \mathcal{SA}_j} w_{j,(a)}(tf_n)} + \frac{E}{|\mathcal{SA}_j|},
\end{align}
where $w_{j,(m)}(tf_n)$ denotes the current weight of strategy $m$ and $0<E<1$. The second term is introduced to avoid trapping in local maxima.  

\textbf{Description of the BRMA Algorithm:} Initially, all the strategies have the same weight and the same probability of selection. At each iteration, one strategy is chosen according to the probability of selection. At the end of each time-frame, virtual rewards of the chosen strategies are derived using Eq.~\eqref{eq:Q}. Also, the utility of those strategies in a cluster from which one member is chosen is estimated using Eq.~\eqref{eq:Q2}. To obtain the estimation, we propose the following similarity index: $\frac{\exp \left(\frac{k_s<A_z,m>}{||A_z||\; ||m||}\right)-1}{\exp({k_s})-1}$, where $k_s>>1$ is a real constant. This index is maximized when two strategies utilize the same exact DCs with the same number of slots, and it is zero when they have no used DCs in common. The weights for the next time-frame are obtained using Eq.~\eqref{eq:weight} ($K$ is a positive real constant) followed by obtaining the probability of selections.
\begin{algorithm}[t]
 	\caption{BRMA: Boosted~regret~minimization assignment}\label{alg:regretmin}
 	 {\footnotesize
 	\SetKwFunction{Union}{Union}\SetKwFunction{FindCompress}{FindCompress}
 	\SetKwInOut{Input}{input}\SetKwInOut{Output}{output}
    \Input{Length of time-frames $\Gamma$, $\mathcal{SA}_j$, number of time frames $TN_j$.}
%     \Output{A sequence of strategies}
	Obtain clusters $\mathcal{C}=\{C_1,C_2,\cdots,C_{|\mathcal{C}|}\}$ from $\mathcal{SA}_j$ according to~\cite{ref:clustering}.\\
    Assign $w_{j_i}(tf_1)=1$, $\mathbf{p}_{j_i}(tf_1)=1/|\mathcal{SA}_j|$, $\forall i \in \mathcal{SA}_j$.\\
    \For{$n=1$ to $TN_j$}{
   Choose a strategy $m\in \mathcal{SA}_j$ according to $\mathbf{p}_j(tf_{\ceil{n/\Gamma}})$ ($A_n=m$) and observe the normalized utility $\tilde{U}_j(n)|_{m}$.\\
    \If{$n=\Gamma k$, $k \in \mathbb{Z}^+$}{
    Obtain the virtual reward for each $m \in \mathcal{A}_{tf_k}$ as follows:
    	\vspace{-2.5mm}
    \begin{equation}\label{eq:Q}
\hspace{-7mm}Q'_{j,(m)}(tf_k)\hspace{-.5mm}=\hspace{-.5mm} \frac{\displaystyle\sum_{z=(k-1)\Gamma+1}^{k \Gamma} \tilde{U}_j(z)|_{A_z} \mathbf{1}_{\{A_z=m\}}}{\kappa^k_m}.
\end{equation}\\
Obtain the virtual reward for those strategies $m \notin \mathcal{A}_{tf_k}$ which at least one strategy from their cluster is chosen as follows:
	\vspace{-2.5mm}
    \begin{equation}\label{eq:Q2}
  \begin{aligned}
Q'_{j,(m)}(tf_k)= \frac{1}{|C_{q}| } \Bigg[ \sum_{z={(k-1)\Gamma+1}}^{{k\Gamma}} \mathbf{1}_{\{A_z\in C_{q}\}}\\
\underbrace{\frac{\exp \left(\frac{k_s <A_z,m>}{||A_z||\; ||m||}\right)-1}{\exp({k_s})-1}}_\text{similarity index} \tilde{U}_j(z)|_{A_z} \Bigg] \hspace{0.6mm} 
\textrm{ if } m\in C_{q}.
\end{aligned}
\end{equation}\\
For strategy $m\in\mathcal{SA}_j$, $Q'_{j,(m)}(tf_k)=0$ if neither itself nor any strategy from its cluster is chosen in this time-frame.\\
	Update the weights of the strategies for the next time-frame:
	\vspace{-2.5mm}
    \begin{equation}\label{eq:weight}
\hspace{-7mm}w_{j,(m)}(tf_{k+1})= w_{j,(m)}(tf_{k}) \exp\hspace{-.5mm} \left(\frac{K  Q'_{j,(m)} (tf_{k})}{|\mathcal{SA}_j|}\right),
\end{equation}\\
 Derive the distribution $\mathbf{p}_j(tf_{k+1})$ according to Eq.~\eqref{eq:prob}.\\
    }
    }
    }
 \end{algorithm}
\begin{remark}\label{remark:Grouping}
In case of updating the weights of the strategies at each time instant, selecting multiple strategies with low utilities in the initial iterations boosts the weights of those strategies leading to an undesired low probability of selection for not chosen high utility strategies. To avoid this, we use the concept of time-frame and update the weights at the end of time-frames. 
\end{remark}
\begin{figure*}[htb]
	\centering
	\includegraphics[width=7.1in, height=.8in]{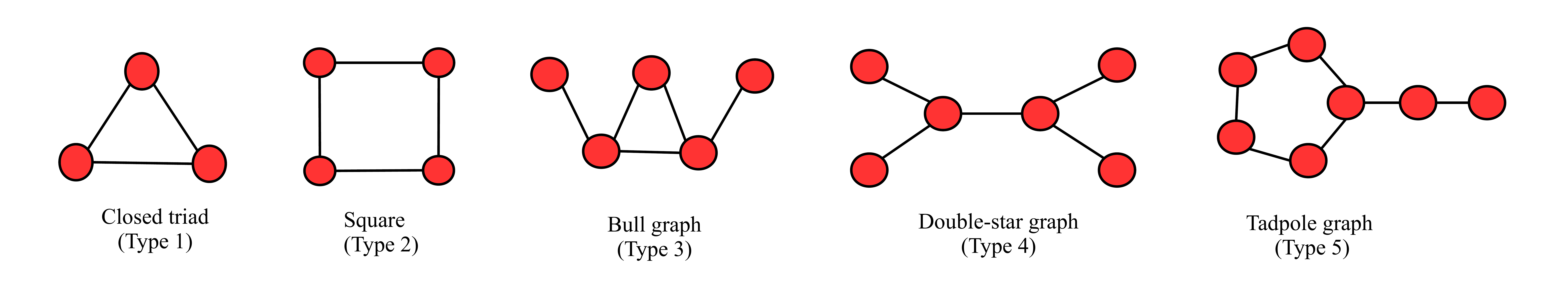}
	\caption{Topology of the graph jobs.}
	\label{fig:graphJobs}
	\vspace{-3mm}
\end{figure*}
\begin{figure*}[htb!]
		\centering
		\subcaptionbox{The incurred power of allocation using the optimal allocation (exhaustive search) and the sub-optimal approach and greedy algorithms.\label{diag:scenario1a}}[.3\linewidth][c]{%
		\includegraphics[width=.330\linewidth]{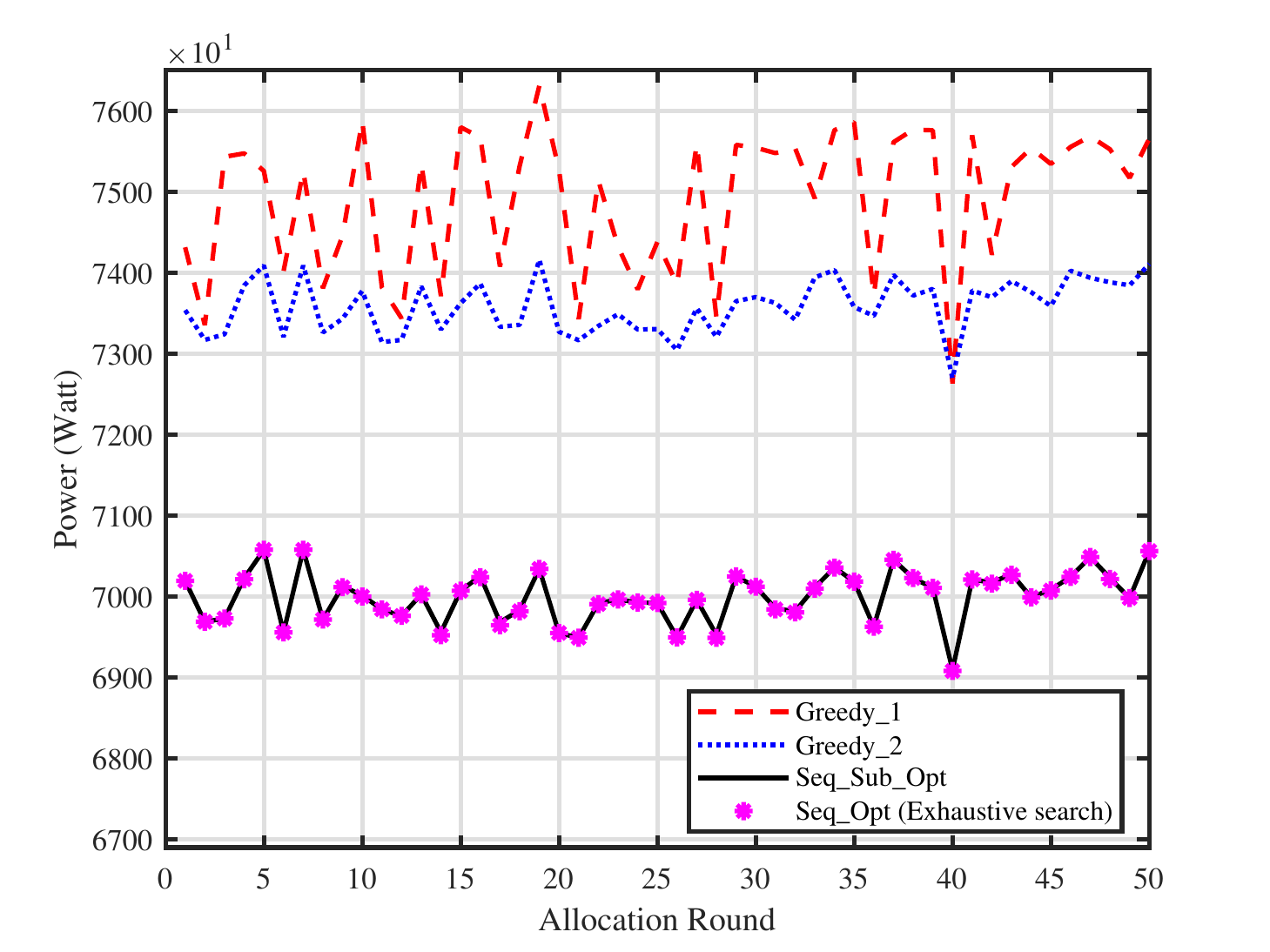}}\quad
		\subcaptionbox{Cumulative average incurred power of allocation using the optimal allocation (exhaustive search) and the sub-optimal approach and greedy algorithms.\label{diag:scenario1b}}[.3\linewidth][c]{%
		\includegraphics[width=.330\linewidth]{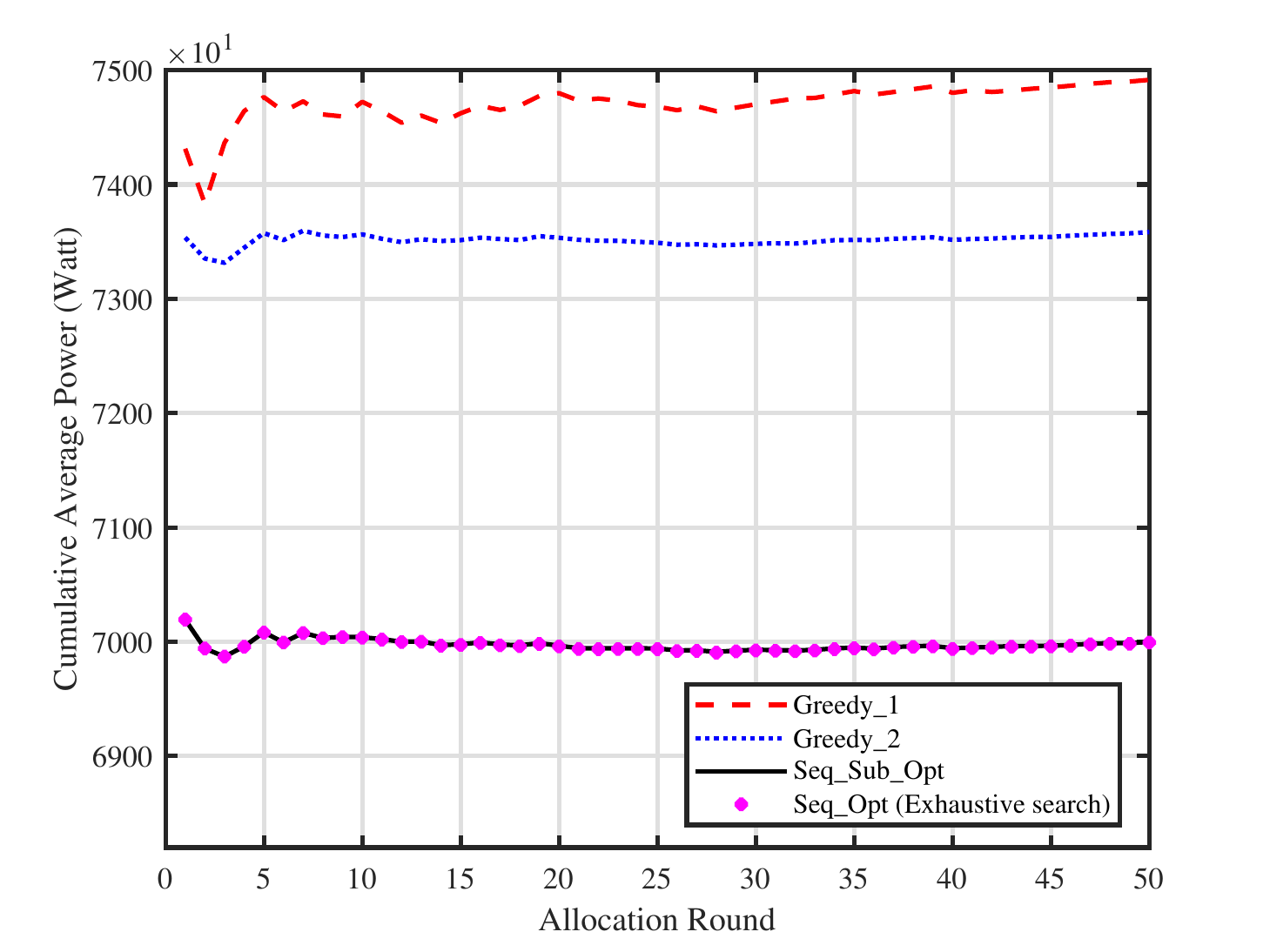}}\quad
		\subcaptionbox{The generated revenue upon using the sub-optimal approach as compared to the greedy methods.\label{diag:scenario1c}}[.3\linewidth][c]{%
		\includegraphics[width=.330\linewidth]{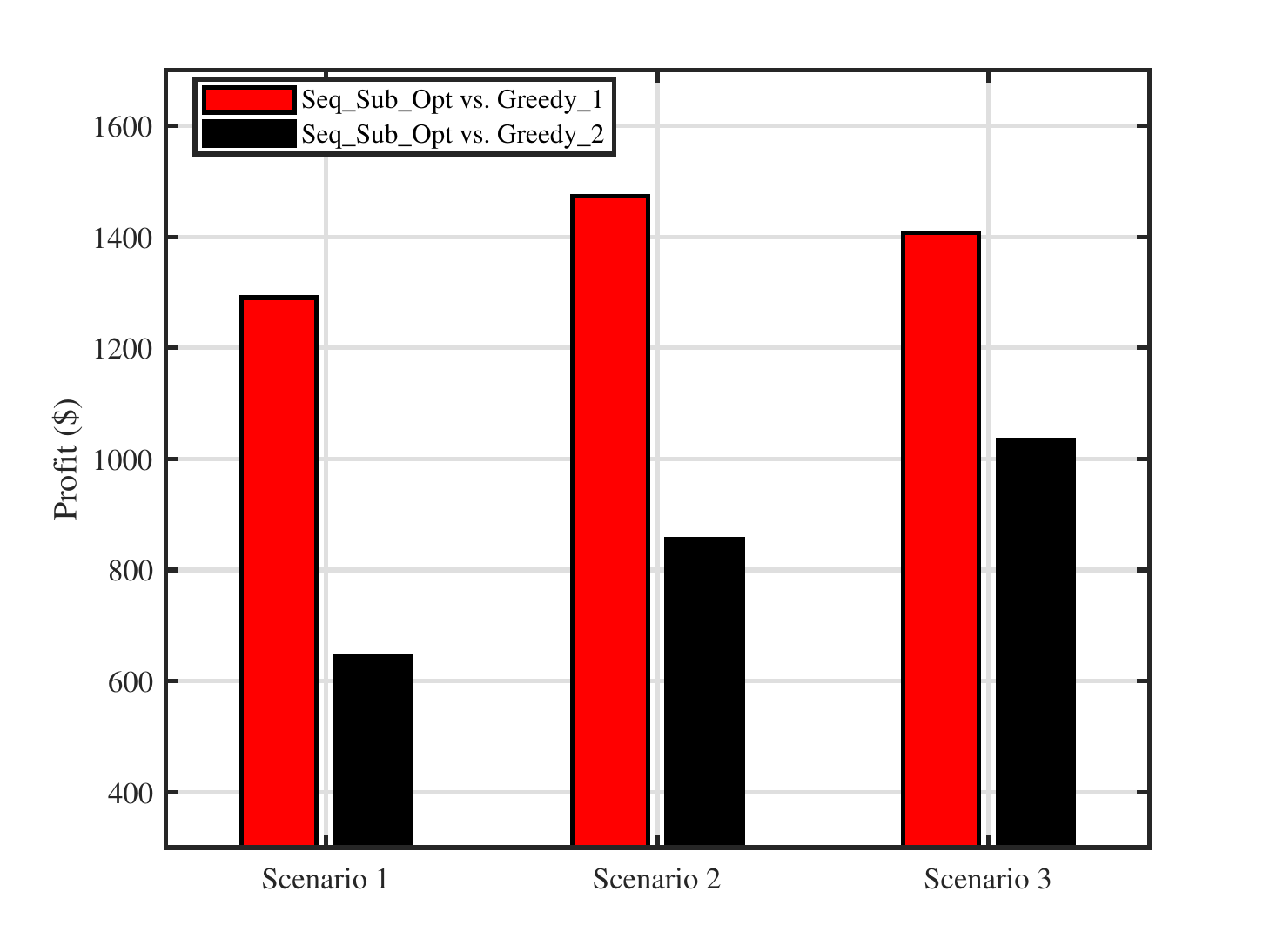}}
       
		\caption{Simulation results of the small-scale {{\GC}}.}
         \label{diag:scenario1}
         \vspace{-5mm}
	\end{figure*}
\subsection{Strategy Selection Under Adaptive Pricing}\label{sec:correquib}
 In modern cloud networks, cloud users are charged in a real-time manner with respect to their incurred load on DCs and the status of the DCs~\cite{zhao2014dynamic, wan2016reactive}. In this work, we propose an adaptive pricing framework suitable for graph job allocation in {\GC}s. Let $\mathcal{P}=\{p_1,\cdots,p_{|\mathcal{P}|}\}$ denote the set of active PAs. For $Gjob_j$, based on Eq.~\eqref{eq:powerPrice}, upon utilizing suggested strategy $\mathbf{M}_{j,(a_k)}\in \mathcal{SA}_j$, we model the total payment of PA $p_k\in \mathcal{P}$ to the DCPs as:
\begin{equation}\label{eq:loadprice}
\begin{aligned}
&\Upsilon(\mathbf{M}_{j,(a_k)},\{\mathbf{M}_{j,({a_{k'}})}\}_{k'=1, k'\neq k}^{|\mathcal{P}|})=\sum_{i=1}^{n_d} m^i_{j,(a_k)}\xi^i \nu^i+\\
&\sum_{i=1}^{n_d}\frac{   \xi^i \eta^i N^i  \left(\hspace{-1mm} \sigma^i \hspace{-1mm} \left(\frac{L^{i}+ \sum_{k=1}^{|\mathcal{P}|}m^i_{j,(a_k)}}{|\mathcal{S}^i|}\right)^{\alpha^i}\hspace{-3mm}+P^i_{idle}  \right)}{\sum_{k=1}^{|\mathcal{P}|} m^i_{j,(a_k)}}m^i_{j,(a_k)}.
\end{aligned}
\end{equation}
Consequently, we model the utility of PA $p_k\in \mathcal{P}$ as:
\begin{equation}\label{eq:util1}
\begin{aligned}
&\hspace{-3mm}c^{p_k}(\{\mathbf{M}_{j,(a_k)}\}_{k=1}^{|\mathcal{P}|})  \triangleq \Pi_{i=1}^{n_d} \mathbf{1}_{\{L^i+\sum_{k=1}^{|\mathcal{P}|} m^{i}_{j,(a_k)} \leq |\mathcal{S}^i| \}} \Big( \rho_j- \\  
&\hspace{-3mm}\chi_j \Upsilon(\{M_{j,(a_k)}\}_{k=1}^{|\mathcal{P}|})  -\phi_j  \sum_{i=1}^{n_d} \mathbf{1}_{\{{m}^{i}_{j,(a_k)}>0\}}
+ \chi_j \Upsilon^{max} \\ 
&\hspace{-3mm}+\phi_j |\mathcal{V}_j|\Big) - \left(1- \Pi_{i=1}^{n_d} \mathbf{1}_{\{L^i+\sum_{k=1}^{|\mathcal{P}|} m^{i}_{j,(a_k)} \leq |\mathcal{S}^i| \}} \right) \Xi^{p_k},
\end{aligned}
\end{equation}
where $\Pi_{i=1}^{n_d} \mathbf{1}_{\{L^i+\sum_{k=1}^{|\mathcal{P}|} m^{i}_{j,(a_k)} \leq |\mathcal{S}^i| \}}$ ensures the availability of enough free slots, $\Xi^{p_k}$ denotes the penalty for delaying the execution, the constants are the same as Eq.~\eqref{eq:util2}, and $\Upsilon^{max}$ denotes the maximum payment of a PA. In this case, the utility of each PA depends not only on its own choice of action but also on the chosen actions by others. In this paradigm, we model the interactions between the PAs as a \textit{non-cooperative game}, more specifically a multi-player \textit{normal form game}. Consequently, we assume that each PA is rational in the sense that it aims to maximize its own utility function. In summary, for $Gjob_j$, the game can be defined as: $\mathbb{G}_j=(\mathcal{P},\{\mathcal{SA}_j\},\{c^p\}_{p \in \mathcal{P}})$, where $c^p$ is the utility of PA $p\in\mathcal{P}$. To solve this game, we use the concept of correlated equilibrium (CE), which generalizes the idea of Nash equilibrium to enable correlated strategy choices among the players. For the proposed game $\mathbb{G}_j$,
we define $\pi_j$ as the probability distribution over the joint strategy space $\Pi_{k=1}^{|\mathcal{P}|} \mathcal{SA_\textit{j}} = \mathcal{SA_\textit{j}} ^{|\mathcal{P}|}$. The set of correlated equilibria $\mathcal{CE}_{j}$ is the
convex polytope given by the following expression:\footnote{$\mathbf{M}_{j,({-p_k})}$ denotes the strategy of all the PAs except $p_k$.}
\begin{align}
&\hspace{-2mm}\mathcal{CE}_{j} = \Big\{\pi_j: \displaystyle\sum_{\mathbf{M}_{j,({-p_k})} \in \mathcal{SA_\textit{j}} ^{|\mathcal{P}|-1}} \pi_j(\mathbf{M}_{j,(a_k)},\mathbf{M}_{j,({-p_k})})\nonumber \\
&\hspace{-3mm}\big[c^{p_k}(\mathbf{M}_{j,({a_k'})},\mathbf{M}_{j,({-p_k})})- c^{p_k}(\mathbf{M}_{j,(a_k)},\mathbf{M}_{j,({-p_k})})\big]\leq 0,\nonumber\\
&\hspace{-2mm}\forall p_k \in \mathcal{P}, \forall \mathbf{M}_{j,(a_k)},\mathbf{M}_{j,({a_k'})} \in \mathcal{SA}_\textit{j} \Big\}.
\end{align}
Inspired by the pioneer work~\cite{ref:regretmatching}, we propose a distributed algorithm, called regret matching-based assignment (RMBA) algorithm, to solve the proposed game while reaching the CE. The PAs' actions are described in Algorithm~\ref{alg:corrEqui}. In RMBA algorithm, each PA saves the history of actions of the other PAs, using which he obtains the past rewards of the actions given that the other PAs would have taken the same actions (Eq.~\eqref{eq:sameac}). Afterward, each PA derives the regret of not executing different strategies (Eq.~\eqref{eq:regretmatch},~\eqref{eq:regretmatch1}). Finally, the probability of selection of the strategies are determined, where strategies with higher rewards in the past receive higher probabilities (Eq.~\eqref{eq:probEqui}).\footnote{Partitioning the time into ``time-frames'' does not have a significant impact on the convergence of the RMBA algorithm. This is due to the fact that at each iteration of this algorithm, the regret for all the strategies are obtained considering the previously taken action of all the PAs, which reduces the chance of trapping in low utility strategies at the initial iterations (see Remark~\ref{remark:Grouping}).}
\setlength{\textfloatsep}{0pt}% Remove \textfloatsep
 \begin{algorithm}[h]
 {\footnotesize
 	\caption{RMBA: Regret matching-based assignment}\label{alg:corrEqui}
 	\SetKwFunction{Union}{Union}\SetKwFunction{FindCompress}{FindCompress}
 	\SetKwInOut{Input}{input}\SetKwInOut{Output}{output}
    \Input{PA $p_k$, graph job's type $j$, number of iterations $N_j$, $\mathcal{SA}_j$.}
    Select strategies randomly for the first iteration ($n=1$).\\
    \For{$n=1$ to $N_j$}{
    Denote history of allocations up to iteration $n$ as $\{\mathbf{A}_{\tau}\}_{\tau=1}^{n}$, where $\mathbf{A}_{t}$ is a vector consisting of $|\mathcal{P}|$ elements, which the $j^{th}$ one, $A_t^{j}$, corresponds to $p_j$'s used allocation at iteration $t$.\\
    Calculate the substituting reward for every two different allocations ($\forall m_1,m_2\in \mathcal{SA}_j$):
    \begin{equation}\label{eq:sameac}
    \begin{aligned}
    SR^{p_k}_{m_1,m_2}(n)=
    \begin{cases}
    c^{p_k}(m_2,\mathbf{A}_n^{-{k}})  &\textrm{  if  } A_n^{k}=m_1,\\
    c^{p_k}(\mathbf{A}_n) &\textrm{O.W}.
    \end{cases}
    \end{aligned}
    \end{equation}\\
    Calculate the substituting average rewards:
    \begin{equation}\label{eq:regretmatch}
    \Delta^{p_k}_{m_1,m_2}(n)= \frac{\sum_{\tau=1}^{n} SR^{p_k}_{m_1,m_2}(\tau)-c^{p_k}(\mathbf{A}_{\tau})}{n}.
    \end{equation}\\
    Calculate the average regret:
    \begin{equation}\label{eq:regretmatch1}
    R^{p_k}_{m_1,m_2}(n)=\max \{\Delta^{p_k}_{m_1,m_2}(n),0 \}.
    \end{equation}\\
    Form the selection probability distribution of the allocations, $\forall m \in \mathcal{SA}_j$, for the next iteration:
    \begin{align}
    \hspace{-6mm}
  %   \resizebox{0.44\textwidth}{!}{
     \begin{cases}
      \hspace{-.0mm}
     p^{p_k}_{j,(m)}(n+1)\hspace{-.5mm}=\hspace{-.5mm} \frac{1}{\hspace{-0mm} \sum_{m'\in \mathcal{SA}_j} R^{p_k}_{A^{k}_{n},m'}(n)} R^{p_k}_{A^{k}_{n},m}(n) 		\\
      \hspace{48mm}\textrm{if } m\neq A^{k}_{n} \hspace{-0.2mm}, \\
      \hspace{-.5mm}
    p^{p_k}_{j,(m)}(n+1)\hspace{-.5mm}=\hspace{-.5mm} 1- \displaystyle\sum_{m'\in \mathcal{SA}_j: m'\neq m}  p^{p_k}_{j,({m'})}(n+1) &\\
    \hspace{48mm}\textrm{Otherwise}.
    \end{cases}\label{eq:probEqui}
 %   }
    \end{align}\\
    Choose a strategy for the next iteration $A^{p_k}_{n+1}$ according to the derived distribution.
    }
    }
 \end{algorithm}

\section{Simulation Results}\label{sec:Simu}
\subsection{Simulation of a Small-Scale \GC}
\noindent In this scenario, the network consists of $5$ fully-connected DCs. The number of slots per DC is assumed to follow one of the three scenarios described in Table \ref{table:Slots}. Each of the cloud servers inside a DC is assumed to have $3$ slots.
\begin{table}[H]
\begin{center}
{\footnotesize
\begin{tabular}{|c | c| c| c| c| c|}
\hline
& DC\_1 &  DC\_2 &  DC\_3 &  DC\_4&  DC\_5 \\\hline 
Scenario $1$& $9$ &$12$ &$15$ &$18$ &$21$\\  \hline
Scenario $2$& $12$ &$15$ &$18$& $21$ &$24$\\  \hline
Scenario $3$& $15$& $18$ &$21$ &$24$ &$27$\\  \hline
\end{tabular}
}
\end{center}
\caption{Number of slots of each DC for different scenarios.}\label{table:Slots}
\end{table}
For all DCs, the following parameters are chosen according to~\cite{ref:powerInfo}, $P_{idle}=150W$, $P_{max}=250W$, $\eta=1.3$, and $\alpha=3$ (these numbers were reported to model the IBM BladeCenter server). Type and topology of the graph jobs are depicted in Fig.~\ref{fig:graphJobs}. It is assumed that at each iteration $10$ graph jobs are needed to be allocated. The arrival rates of graph jobs are set to $1$, $1$, $1$, $3$, and $4$ per iteration for type $1$ to type $5$.\footnote{{\color{black}It is observed that large graph jobs containing more nodes and complicated communication constraints lead to a larger performance gap between our proposed methods and the baseline algorithms as compared to small graph jobs. Note that the actual performance gap is dependent on the topology of the  graph job and may vary from one to another.}}
The initial load of each DC is assumed to be a random variable uniformly distributed between $0$ and $20\%$ of its number of slots. For each DC, inspired by~\cite{ref:powerDCsurvey}, $\nu$ is chosen to be $5\%$ of the peak power consumption. The cost of electricity ($\xi^i,\; \forall i$) is chosen to be the average cost of electricity in the US $0.12 \$/kWh$. In simulations, we use the term ``incurred power" referring to the difference between the power consumption of the {\GC} after the graph jobs are assigned as compared to that before the assignment. {\color{black} Since we are among the first to study the power-aware allocation of graph jobs in GDCNs, there is no baseline for direct comparison. Hence, we propose two greedy algorithms as the baselines:}

\textbf{Greedy 1:} In this algorithm, for each DC, its future power consumption upon allocating all the nodes of the arriving graph job to it is derived. Then, the DCs are sorted in ascending order according to their future power consumption as: $d^{i_1},\cdots,d^{i_{n_d}}$. Finally,  from the feasible set of assignments, the  assignment with the largest number of utilized slots from $d^{i_1}$ is chosen. The ties are broken considering the number of utilized slots from $d^{i_{2}}$ and so on.

\textbf{Greedy 2:} In this algorithm, at first the number of free slots of each DC is derived and the DCs are sorted in descending order with respect to their number of free slots as: $d^{i_1},\cdots,d^{i_{n_d}}$. From the feasible set of assignments, the assignment with the largest number of utilized slots from $d^{i_1}$ is chosen. Upon having a tie, the process continues by comparing the number of slots utilized form $d^{{i}_{2}}$ and so on.

Simulation results of the sequential graph job allocation described in Section~\ref{sec:Small} are presented in Fig.~\ref{diag:scenario1}. For the third scenario\footnote{The rest are omitted due to similarity.} described in Table~\ref{table:Slots}, Fig.~\ref{diag:scenario1}(\subref{diag:scenario1a}) reveals a negligible gap between solving the integer programming described in Eq.~\eqref{eq:int1}-\eqref{eq:int2} using exhaustive search and using the subsequent proposed sub-optimal method (Eq.~\eqref{eq:ms}-\eqref{eq:weightedmeanSquare}), and Fig.~\ref{diag:scenario1}(\subref{diag:scenario1b}) depicts the corresponding cumulative average incurred power of allocations. For all the three scenarios, Fig.~\ref{diag:scenario1}(\subref{diag:scenario1c}) depicts the cumulative profit obtained after $100$ iterations upon using the proposed sub-optimal approach as compared to the greedy methods if graph jobs stay busy in the system for $24$ hours at each round of allocation. As can be seen from Fig.~\ref{diag:scenario1}(\subref{diag:scenario1c}), on average, our method leads to saving of \$1100 on electricity cost.
	\begin{figure*}[htb!]
		\centering
		\subcaptionbox{Convergence of local parameter $\Lambda$ for different DCs.\label{diag:scenarioDista}}[.3\linewidth][c]{%
		\includegraphics[width=.330\linewidth]{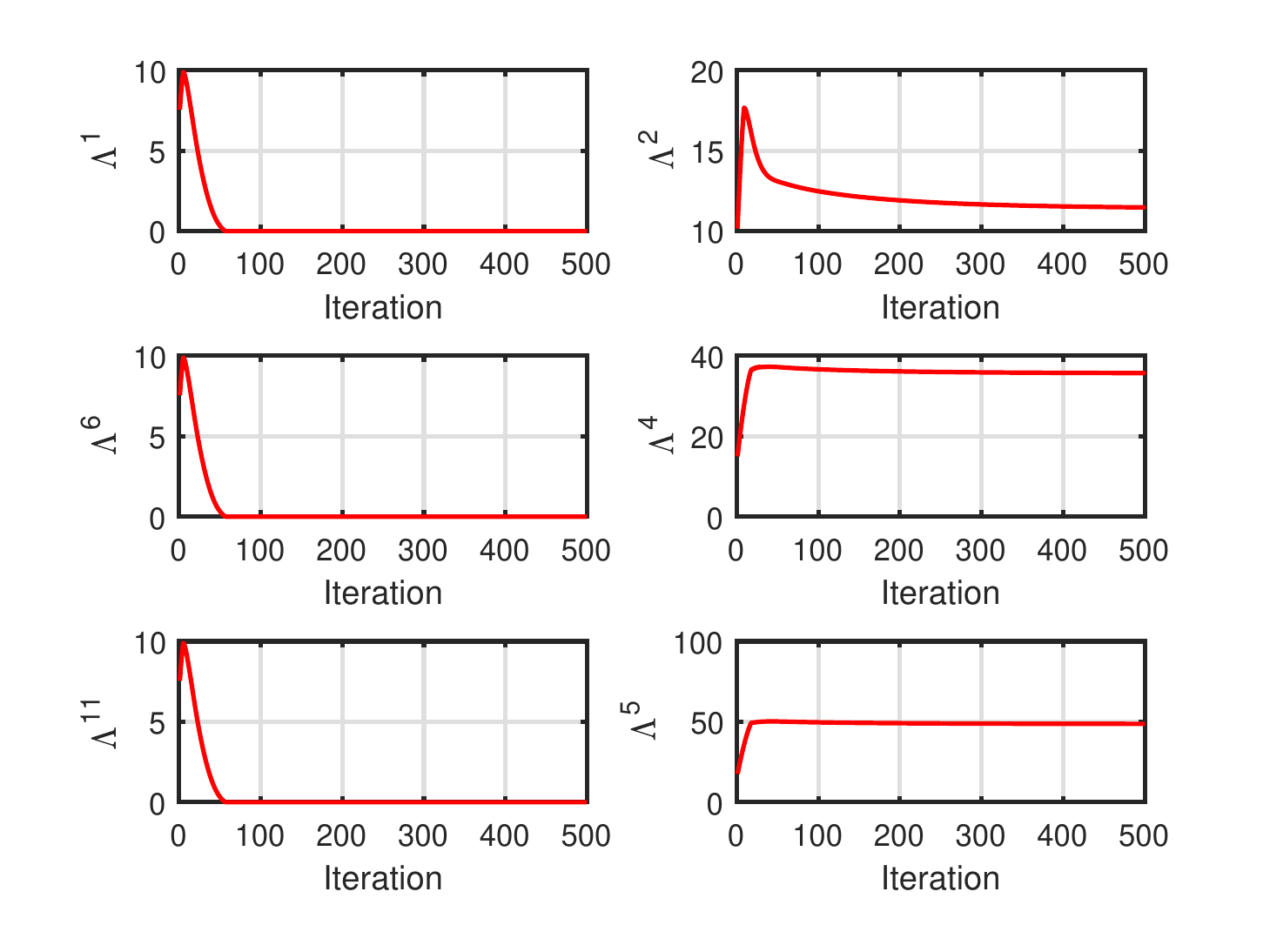}}\quad
		\subcaptionbox{Convergence of global parameter $\gamma$ for different DCs.\label{diag:scenarioDistb}}[.3\linewidth][c]{%
		\includegraphics[width=.330\linewidth]{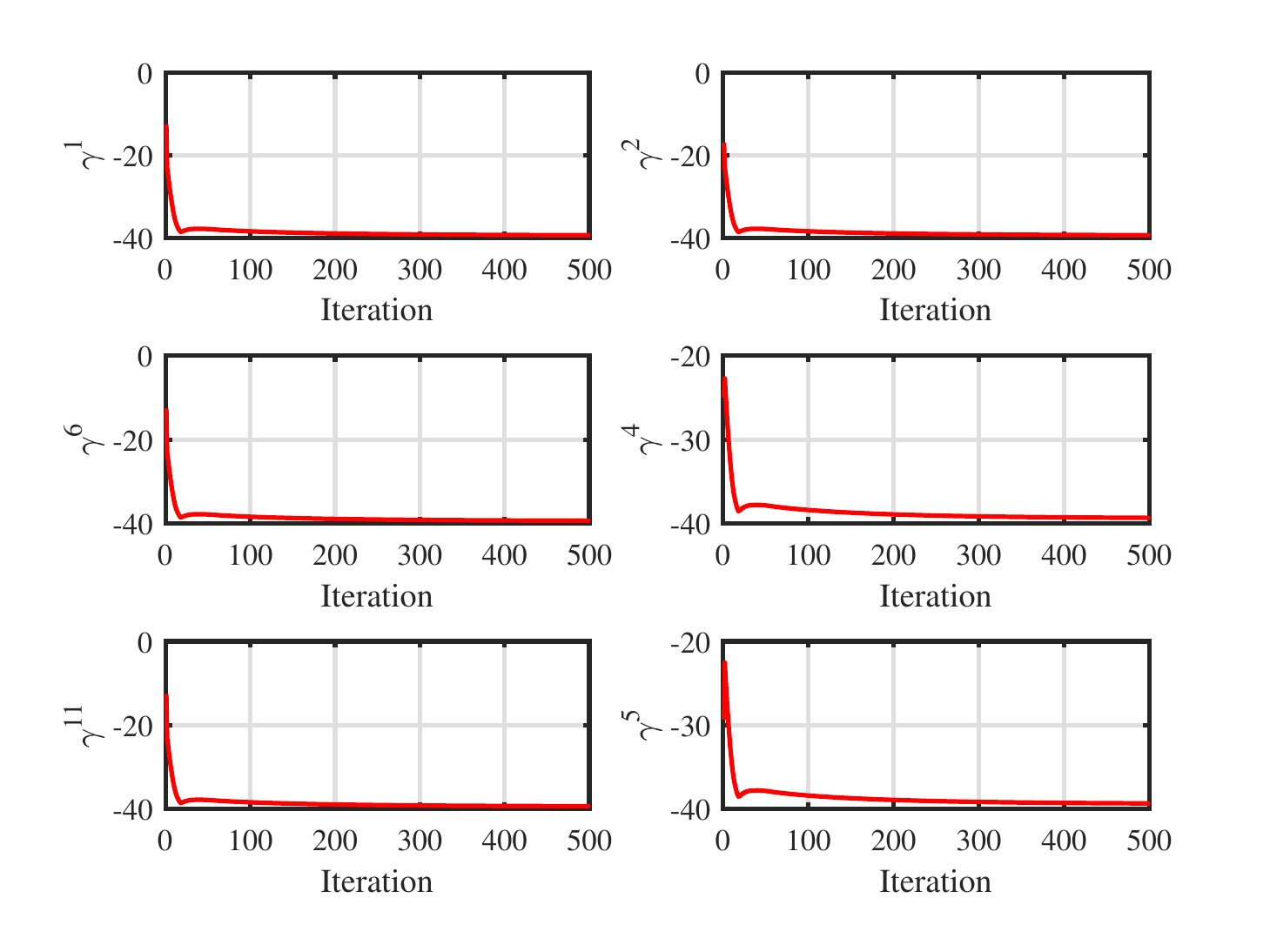}}\quad
		\subcaptionbox{Number of utilized slots of various DCs.\label{diag:scenarioDistc}}[.3\linewidth][c]{%
		\includegraphics[width=.330\linewidth]{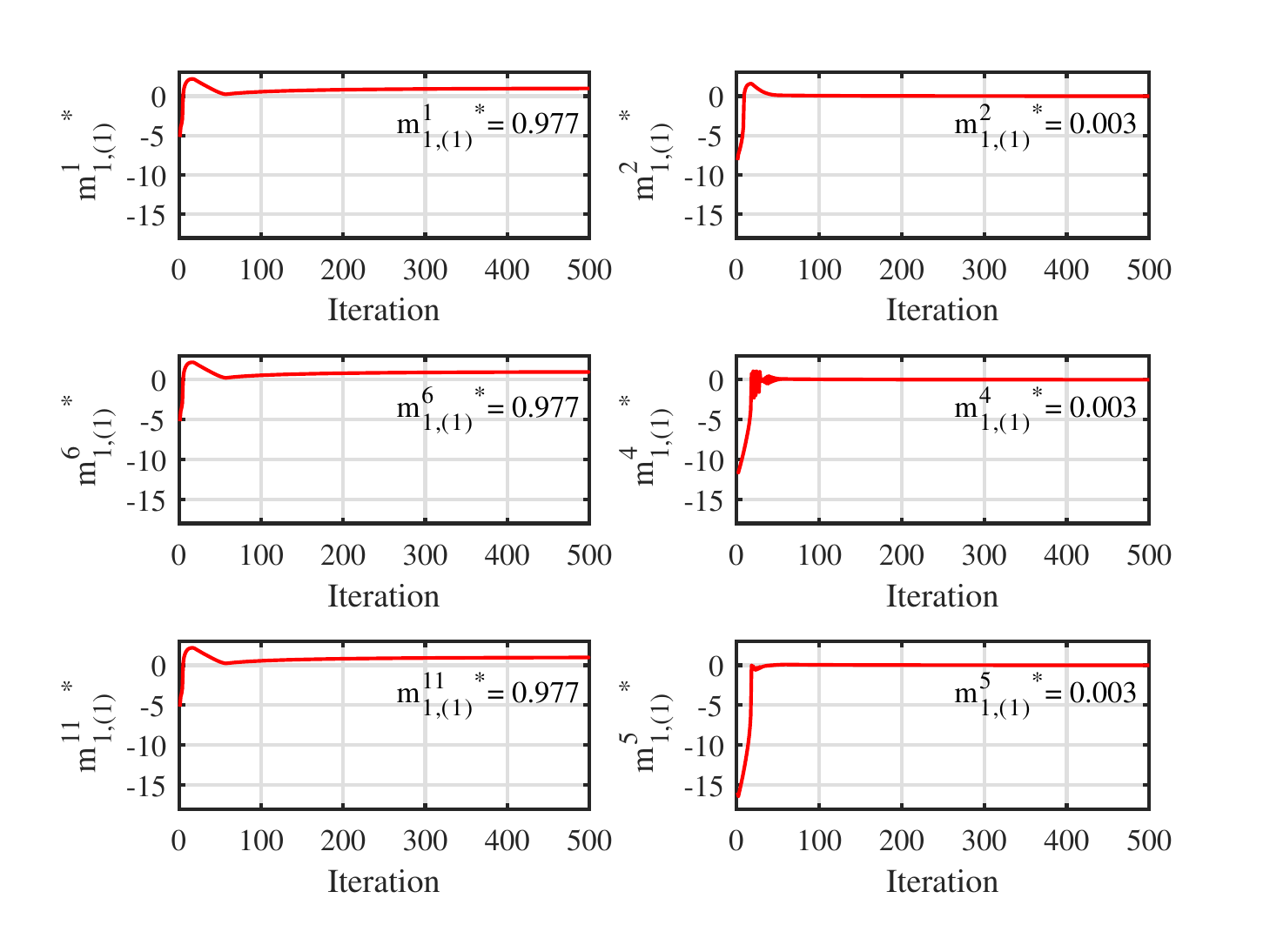}}
       
		\caption{Evolution of the parameters using the CDGA algorithm. The parameters of only 6 DCs are shown for more readability.}
         \label{diag:scenarioDist}
         \vspace{-4.5mm}
	\end{figure*}
	\begin{figure*}[htb!]
		\centering
		\subcaptionbox{The incurred power of allocation using the optimal allocation (exhaustive search), CDGA algorithm, and greedy algorithms.\label{diag:scenario22a}}[.3\linewidth][c]{%
		\includegraphics[width=.330\linewidth]{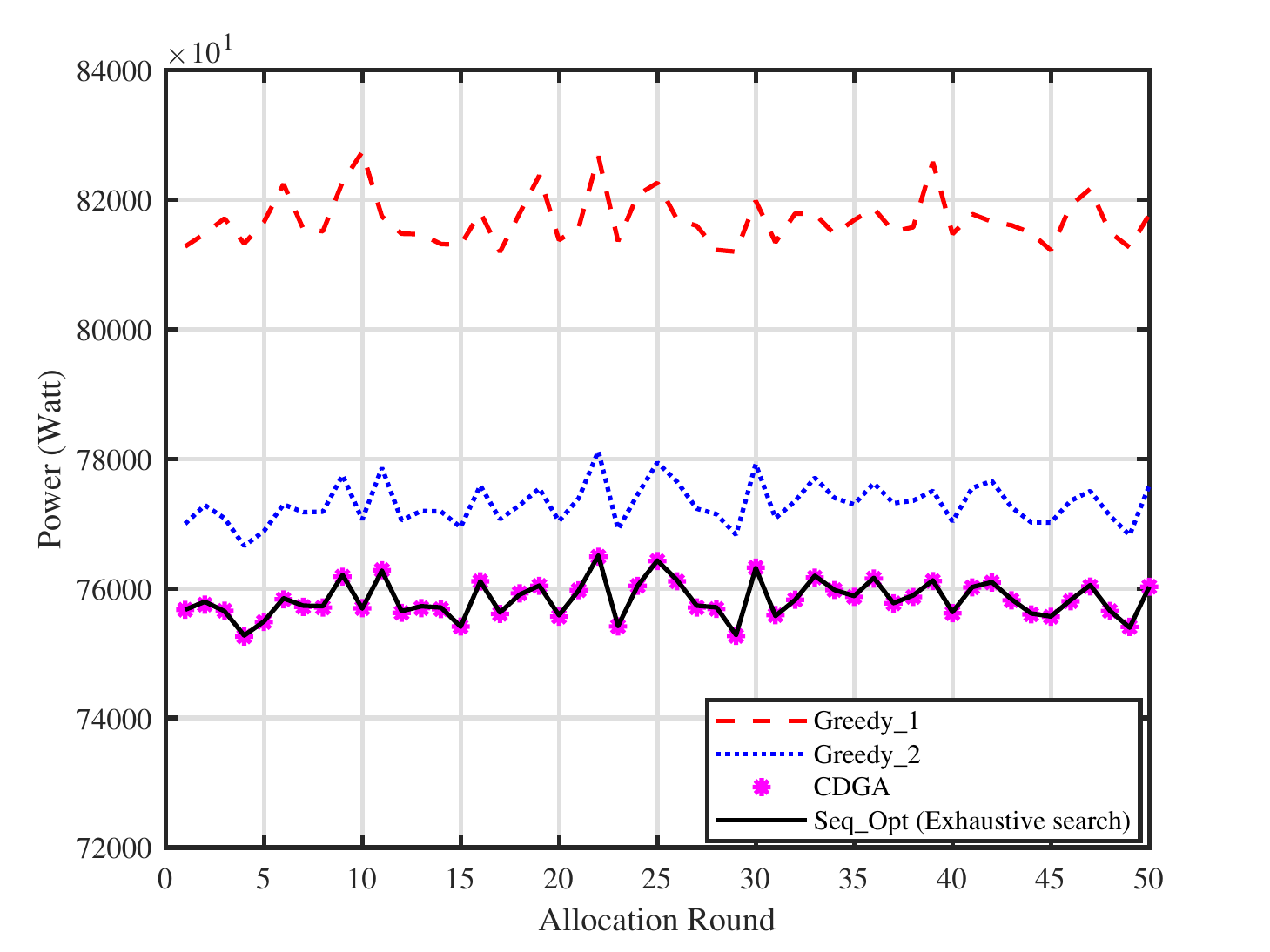}}\quad
		\subcaptionbox{Cumulative average incurred power of allocation using the optimal allocation (exhaustive search), CDGA algorithm, and greedy algorithms.\label{diag:scenario22b}}[.3\linewidth][c]{%
		\includegraphics[width=.330\linewidth]{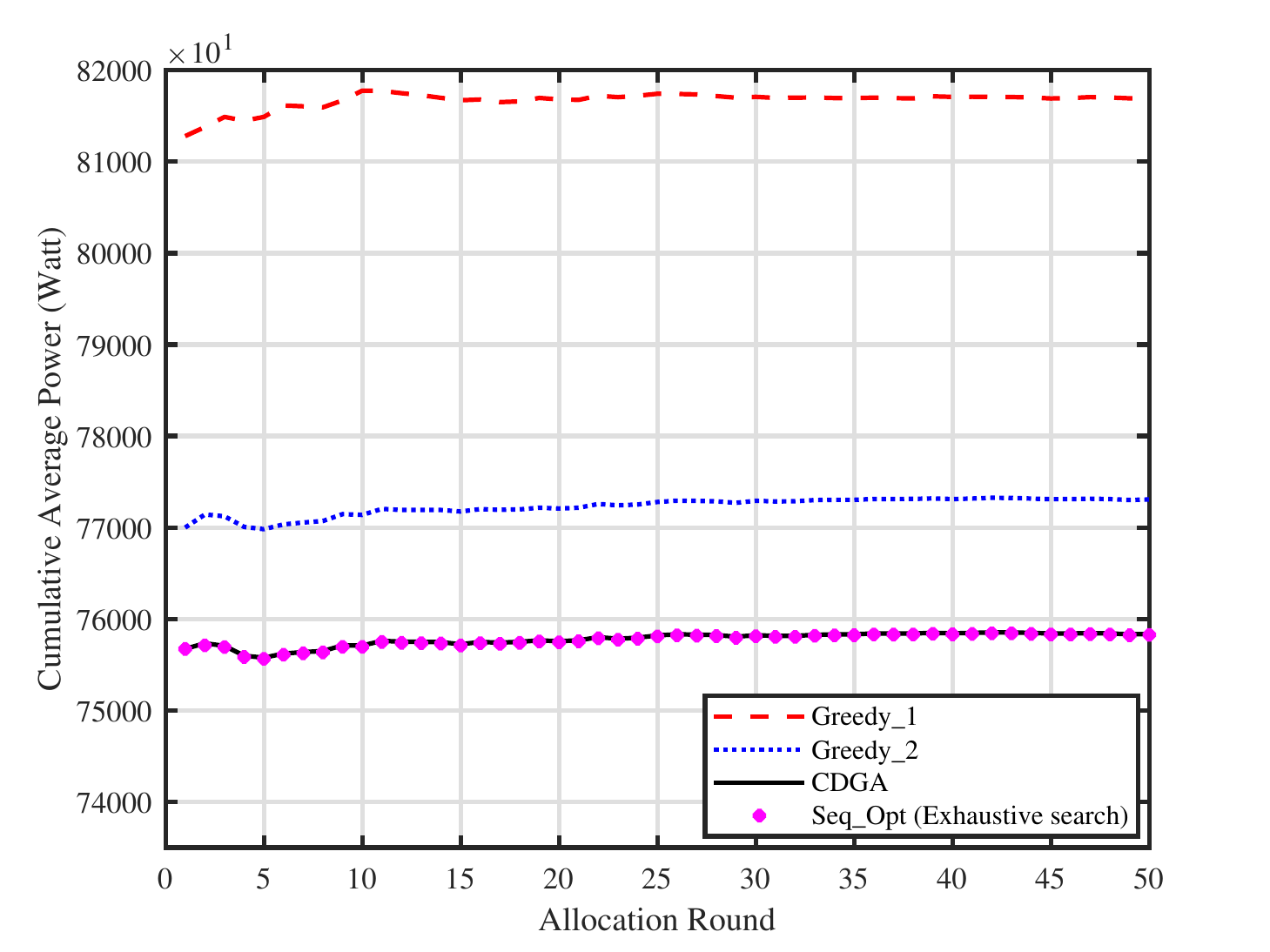}}\quad
		\subcaptionbox{The generated revenue upon using the CDGA algorithm as compared to the greedy methods.\label{diag:scenario22c}}[.3\linewidth][c]{%
		\includegraphics[width=.330\linewidth]{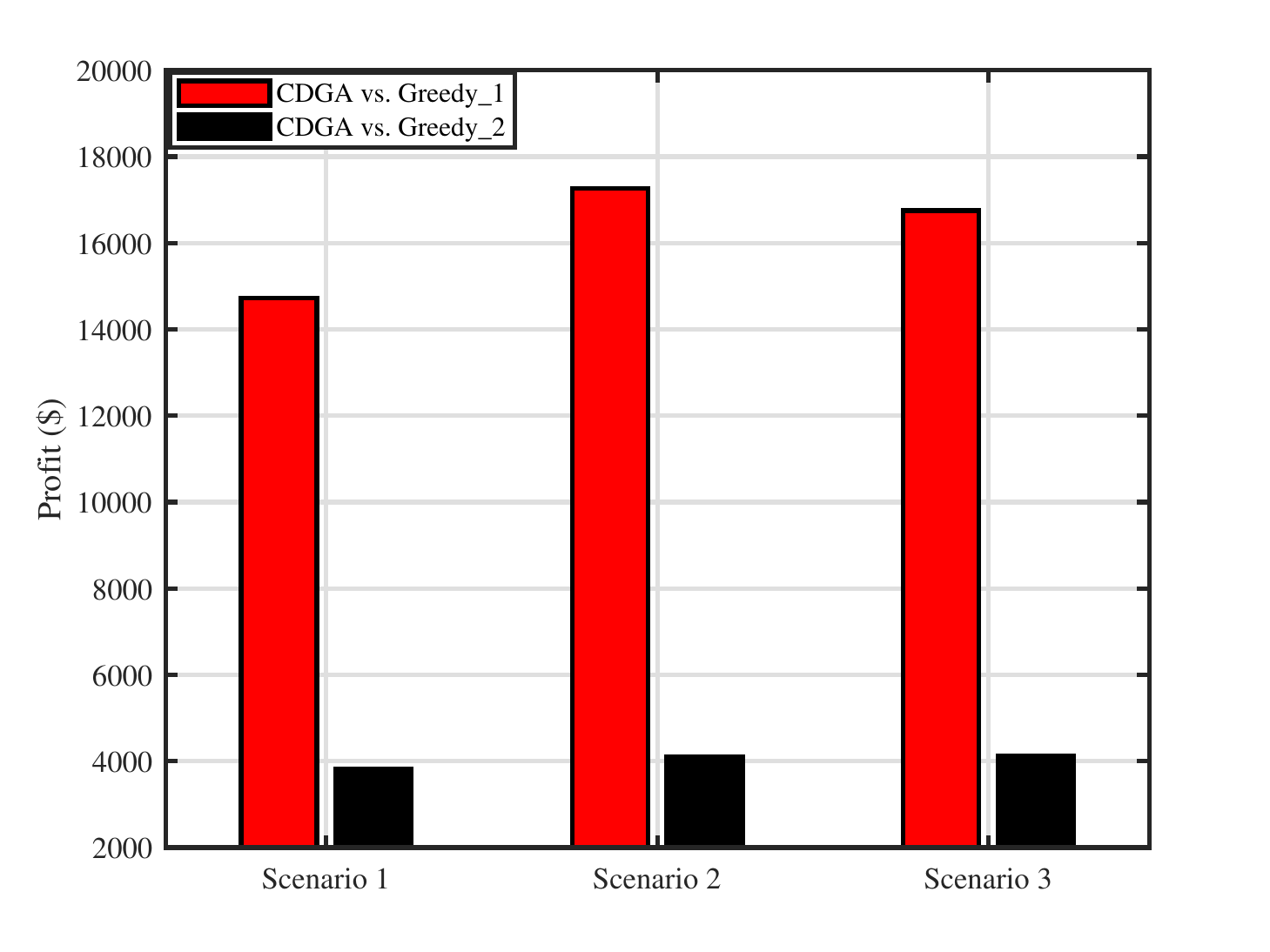}}
       
		\caption{Simulation results of the medium-scale {{\GC}}.}
         \label{diag:scenario22}
                  \vspace{-4.5mm}
	\end{figure*}
\subsection{Simulation of a Medium-Scale \GC}
In this scenario, the network comprises $15$ fully-connected DCs. Similar to the previous case, we consider $3$ scenarios, each created via (fully) connecting the three replicas of a {\GC} described in Table~\ref{table:Slots}. The number of slots per DC, power parameters of each DC, $\nu$, and the price of electricity are assumed to be the same as the previous case. It is assumed that at each iteration, there are $30$ graph jobs waiting to be assigned to the network (the arrival rate for each type is three times higher than that of the first scenario). The step-sizes in Eq.~\eqref{eq:full_dist_2} are set as follows: $c_{\lambda}=0.1$, $c_{\gamma}=0.18$, and $c_{\Lambda}=0.15$. We choose $\phi=5$, and for deriving the matrix $\mathbf{W}$, the parameter $\epsilon$ is chosen to be $0.1$. Also, we choose the initial value of $\Lambda^i$, $\lambda^i$, $\gamma^i$ at DC $d^i$ to be $ \frac{\eta^i N^i  \sigma^i \alpha^i}{5(|\mathcal{S}^i|)^{\alpha^i}}$, $0$, and $\frac{-\eta^i N^i  \sigma^i \alpha^i}{3(|\mathcal{S}^i|)^{\alpha^i}}$, respectively. For assignment of a graph job with type $1$, Fig.~\ref{diag:scenarioDist} depicts the convergence of the local and global variables of the DCs at $6$ sampled DCs. Fig.~\ref{diag:scenarioDist}(\subref{diag:scenarioDista}) describes the convergence of the local variable $\Lambda$ at the DCs, Fig.~\ref{diag:scenarioDist}(\subref{diag:scenarioDistb}) shows the convergence of the global variable $\gamma$, and Fig.~\ref{diag:scenarioDist}(\subref{diag:scenarioDistc}) depicts the number of offered slots of each DC to the graph job. The parameter $\lambda$ takes the value of zero almost always upon the convergence, and thus is not depicted. As can be seen from Fig.~\ref{diag:scenarioDist}(\subref{diag:scenarioDistc}), the DCs $1$, $6$, and $11$ would offer one slot to the graph job while the rest would offer zero slots. Also, from Fig.~\ref{diag:scenarioDist}(\subref{diag:scenarioDistb}), it can be seen that the initial non-identical choices of the global variable $\gamma$ at each DC finally converges to a unified value among the DCs. 
	
Fig.~\ref{diag:scenario22} depicts the results of comparing the CDGA algorithm to the greedy algorithms. Fig.~\ref{diag:scenario22}(\subref{diag:scenario22a}) depicts the incurred power of allocation for scenario $3$ of network construction, and Fig.~\ref{diag:scenario22}(\subref{diag:scenario22b}) shows the cumulative average incurred cost of allocation. These two figures reveal the close-to-optimal performance of the CDGA algorithm. Also, Fig.~\ref{diag:scenario22}(\subref{diag:scenario22c}) demonstrates the obtained profit using the CDGA algorithm as compared to the greedy algorithms after $100$ iterations. As can be seen from this figure, on average, our method results in saving of \$10000 on electricity cost.

\begin{figure*}[t]
		\centering
		\subcaptionbox{Comparison between the $P^{90\%}$ of the BRMA algorithm and the random strategy selection for various graph jobs.\label{diag:fixedPricesa}}[.3\linewidth][c]{%
		\includegraphics[width=.330\linewidth]{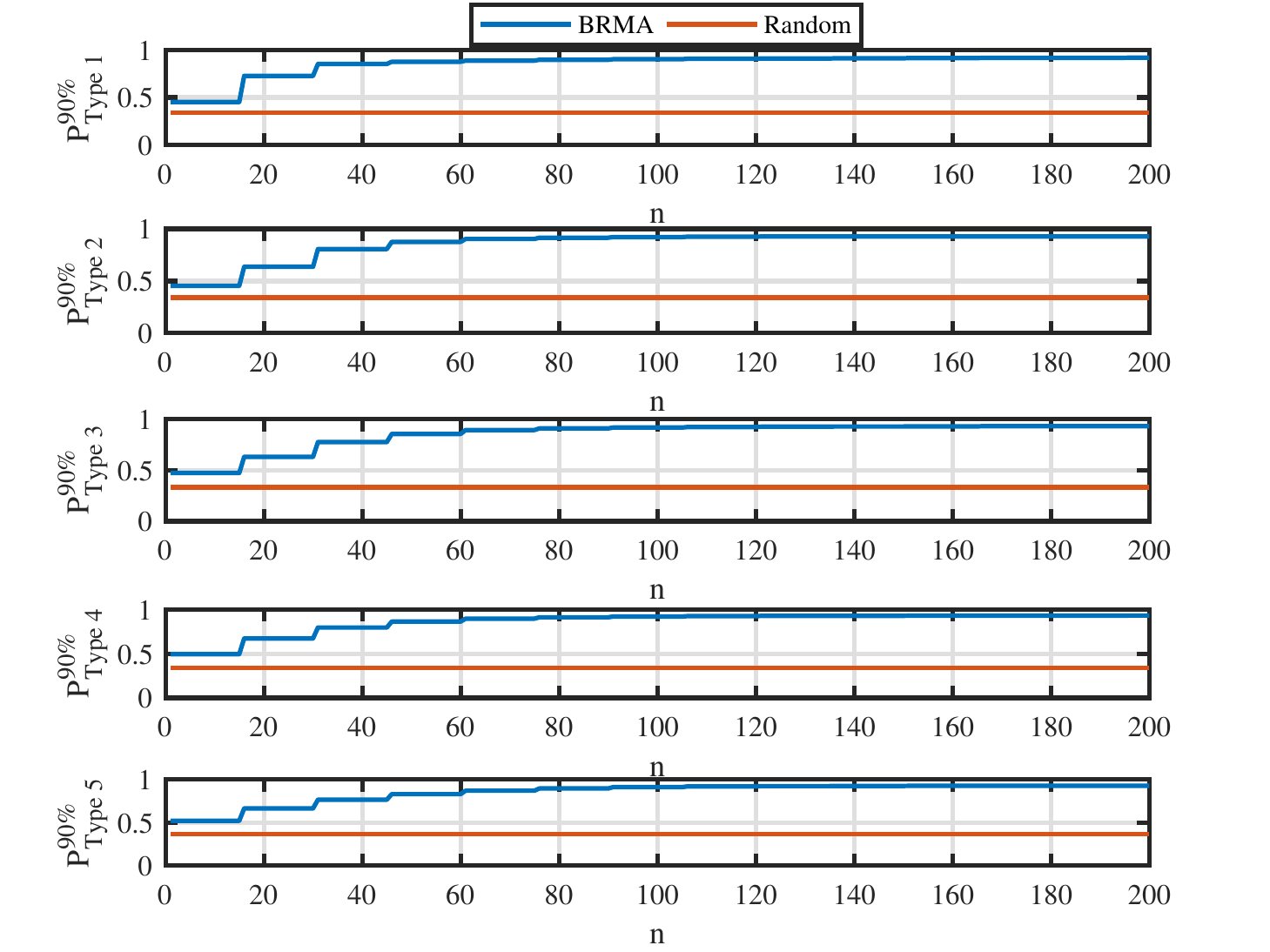}}\quad
		\subcaptionbox{Comparison between the utility of the BRMA algorithm and the random strategy selection for various graph jobs.\label{diag:fixedPricesb}}[.3\linewidth][c]{%
		\includegraphics[width=.330\linewidth]{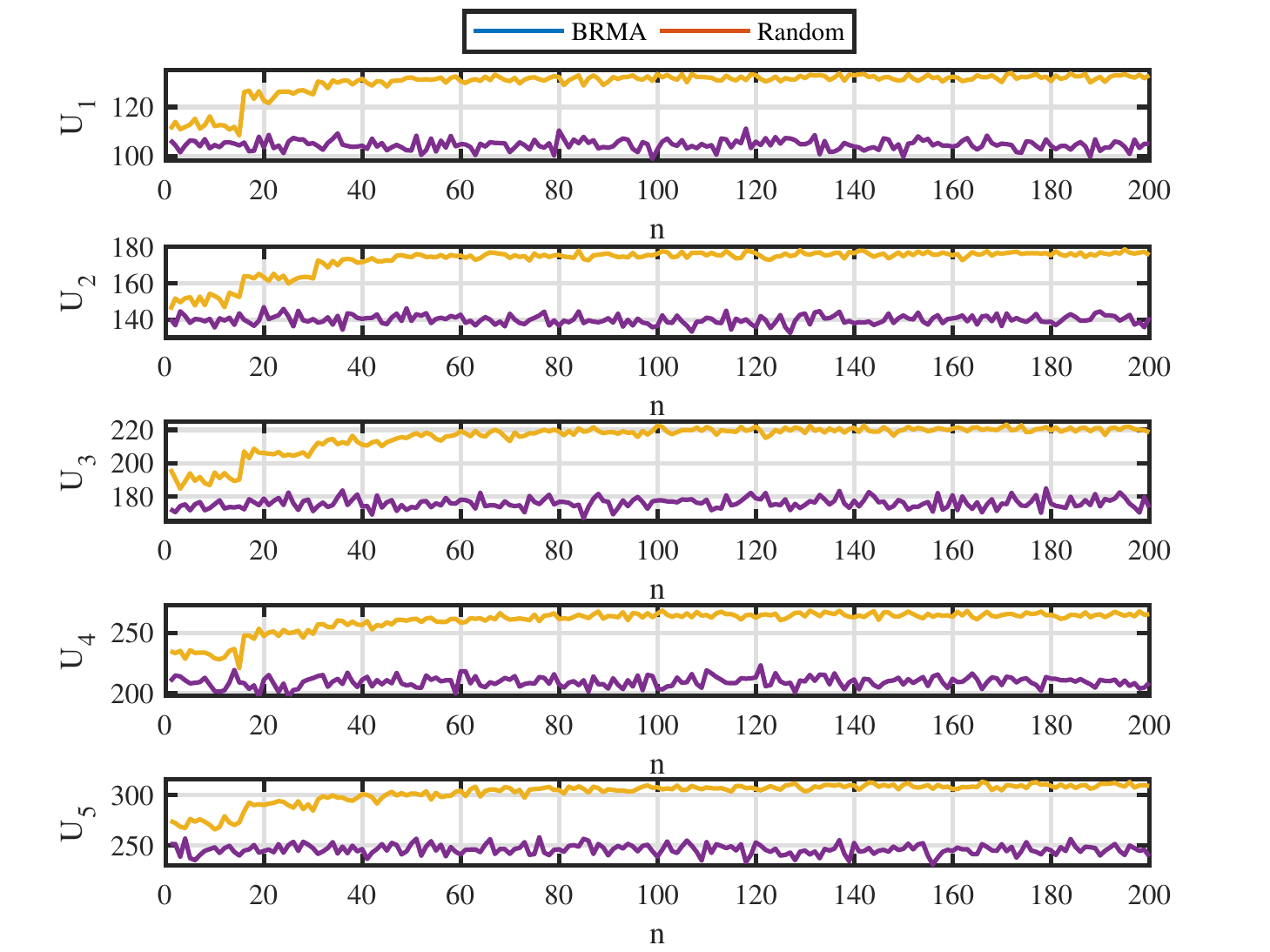}}\quad
		\subcaptionbox{Effect of having different similarity considerations (left) and different number of clusters (right) for graph job with type $5$.\label{diag:fixedPricesc}}[.3\linewidth][c]{%
		\includegraphics[width=.330\linewidth]{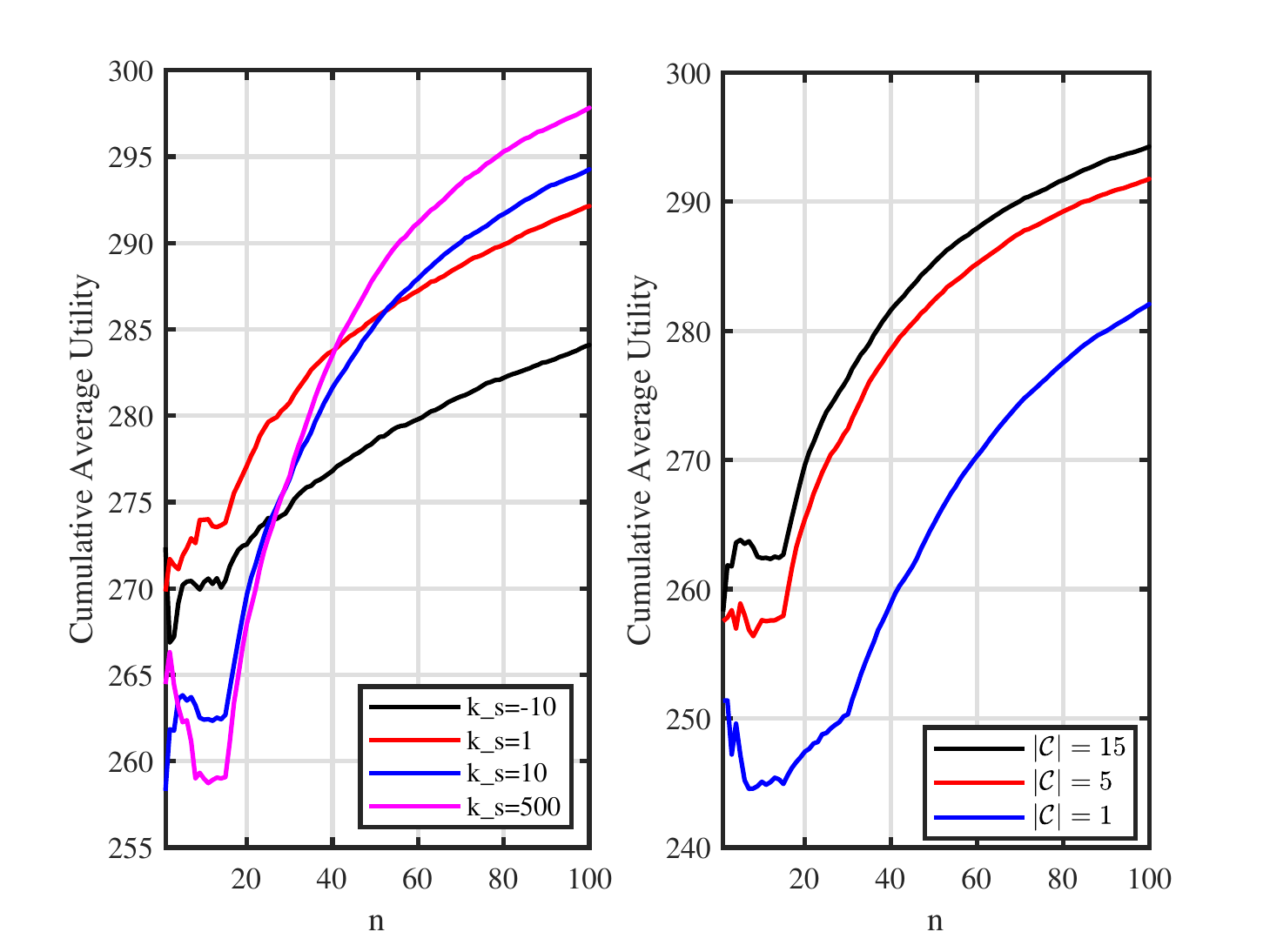}}
		\caption{Simulation results of the large-scale {{\GC}} upon having DCs with fixed prices.}
        \label{diag:fixedPrices}
                 \vspace{-4mm}
       \end{figure*}

        \begin{figure*}[t]
        \vspace{3mm}
		\centering
		\subcaptionbox{Average utility of all the PAs for different types of graph jobs (left), and utility of individual PAs for each type of graph job in one round of RMBA algorithm (right).\label{diag:dynamicPricesa}}[.3\linewidth][c]{%
		\includegraphics[width=.320\linewidth]{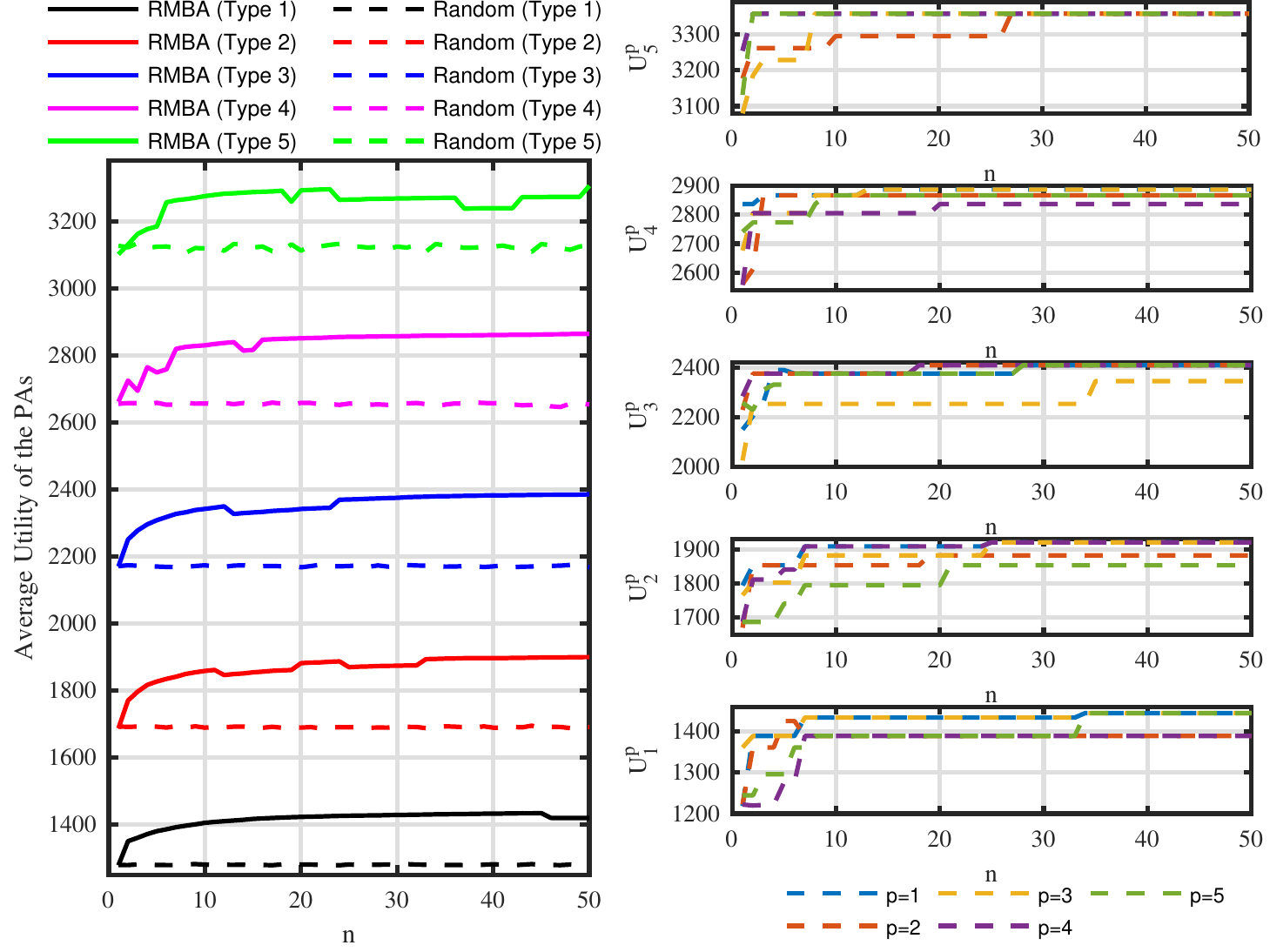}}\quad
		\subcaptionbox{Total power consumption of the utilized servers for the RMBA algorithm and random strategy selection.\label{diag:dynamicPricesb}}[.3\linewidth][c]{%
		\includegraphics[width=.330\linewidth]{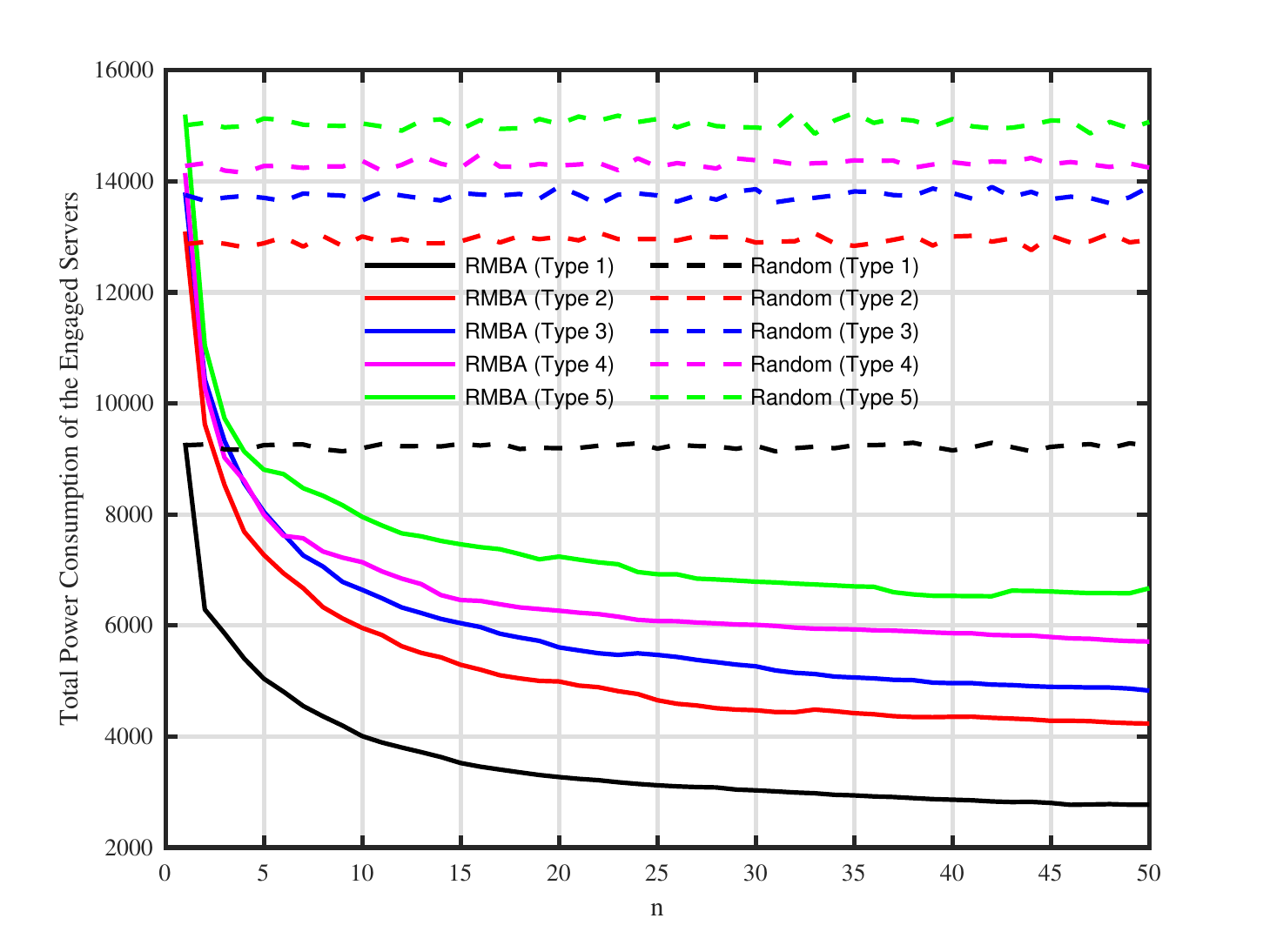}}\quad
		\subcaptionbox{Reduction in currency in circulation upon using the RMBA algorithm as compared to the random strategy selection.\label{diag:dynamicPricesc}}[.3\linewidth][c]{%
		\includegraphics[width=.330\linewidth]{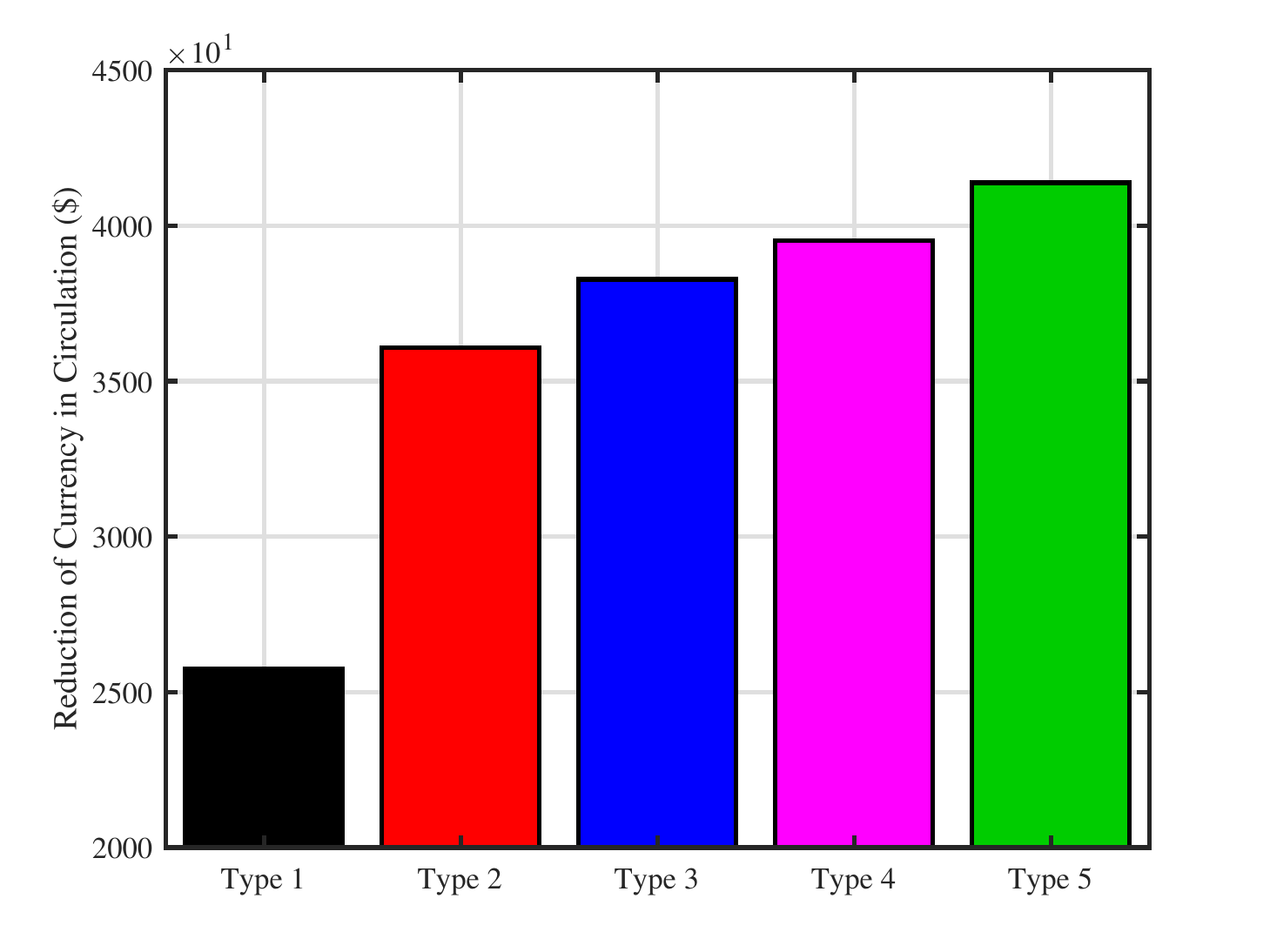}}
       
		\caption{Simulation results of the large-scale {{\GC}} upon having DCs with dynamic prices.}
         \label{diag:dynamicPrices}
         \vspace{-4mm}
	\end{figure*}
\subsection{Simulation of a Large-Scale \GC}
\subsubsection{The Network and the CCR's Setting} We consider a {\GC} consisting of 200 DCs each of which possesses $3k$ slots, where $k$ is a random integer number between $4$ and $11$. It is assumed that each server in a DC contains $3$ slots. The power parameters of each DC, $\nu$, and the price of electricity are assumed to be the same as before. In \cite{gyarmati2010scafida}, a scale-free (SF) architecture called Scafida is proposed for DC networks. The main advantages of this model are error tolerance, scalability, and flexibility of network architecture. Also, this graph structure is used to model Internet connections \cite{ref:SF}. Considering these facts, the network topology of the {\GC} is assumed to be an SF graph constructed using preferential attachment mechanism with parameter $m = 3$~\cite{ref:SF}. For each type of graph job, we run a CCR on the network, where the initial place of the CCR is randomly chosen among the DCs and the size of carrying BST is set to $1000$. In the following, the strategies of the PAs are always chosen from the suggested strategies of the CCRs.

\subsubsection{DCs with Fixed Prices}
In this case, the load of each DC is assumed to be a uniform random variable between $20\%$ and $100\%$ of its number of slots. The price of utilizing each slot is fixed to be the total electricity cost of the DC divided by its number of slots. For a PA, for each type of job, the utility is derived by fixing all the values of $\rho$, $\chi$, and $\phi$ to $1$. The strategy exploration parameter $E$ is set to $0.01$, length of time-frame $\Gamma$ is chosen to be $15$, $|\mathcal{C}|=15$, $k_s=10$ in Eq.~\eqref{eq:Q2}, and $K=10$ in Eq.~\eqref{eq:weight}. To increase the convergence rate, we force the algorithm to execute one strategy from each cluster during the first time-frame. The resulting curves are obtained via averaging over $200$ simulations. In this case, our BRMA algorithm is compared with the random strategy selection method due to the lack of an existing algorithm for the problem. Note that in this case, load parameters and prices of the DCs are unknown prior to the graph job assignment; hence the above-defined greedy algorithms can not be applied to this setting. Besides the utility, we also introduce $P^{90\%}$ as a performance metric, which is the probability of selecting an allocation that has at most $10\%$ lesser utility as compared to the best suggested allocation. The results are depicted in Fig.~\ref{diag:fixedPrices}. With one graph job assigned per iteration, Fig.~\ref{diag:fixedPrices}(\subref{diag:fixedPricesa}) depicts the convergence of $P^{90\%}$ using the BRMA algorithm for all types of jobs, while Fig.~\ref{diag:fixedPrices}(\subref{diag:fixedPricesb}) reveals the corresponding utility of assignment as compared to the random strategy selection. As can be seen, the probability of choosing the high utility strategies increases while the BRMA algorithm explores the suggested strategies. Also, it can be seen that the proposed BRMA algorithm has a higher utility even at the first $15$ iterations. This is because at the first time-frame the BRMA algorithm exercises one strategy from each cluster, based on which strategies with good utilities are explored with higher probabilities. The utility gain of at least $20\%$ for each graph job assignment upon using our BRMA algorithm can be seen from Fig.~\ref{diag:fixedPrices}(\subref{diag:fixedPricesb}). Also, Fig.~\ref{diag:fixedPrices}(\subref{diag:fixedPricesc}) depicts the effect of changing the similarity coefficient $k_s$ (left sub-plot) and the effect of choosing various number of clusters (right sub-plot). For $k_s=-10$ all the strategies have a high similarity factor, and thus the algorithm does not perform well. As this parameter increases, the effect of similarity on weight update decreases, and the algorithm aims to select the best strategy with the highest utility rather than selecting a group of high utility strategies. Due to this fact, choosing $k_s=1$ as compared to $k_s=500$ results in a higher initial utility since the algorithm has more tendency toward choosing a portion of strategies with high utilities. However, choosing $k_s=500$ leads to a higher final utility. Also, as can be seen, having more clusters (up to the size of the time-frame) leads to a finer grain partitioning and a better performance.

 \subsubsection{DCs with Dynamic Prices}
 In this case, the PAs' payments is associated with the load of their utilized DCs (Eq.~\eqref{eq:loadprice}). The initial loads of the DCs are chosen to be uniformly distributed between the $20\%$ and $100\%$ of their number of slots. The presence of $5$ PAs is assumed in the system. For each PA, for each type of job, the utility is derived by fixing all the values of $\rho$, $\chi$, and $\phi$ to $1$ in Eq.~\eqref{eq:util1}. The results are presented in Fig.~\ref{diag:dynamicPrices}. In Fig.~\ref{diag:dynamicPrices}(\subref{diag:dynamicPricesa}), the average utility of the PAs for assigning one graph job over 100 Monte-Carlo simulations is presented (left plot). As can be seen, at least $20\%$ utility gain is obtained using our RMBA algorithm. To demonstrate the real-time performance of the RMBA algorithm, in the right plot of Fig.~\ref{diag:dynamicPrices}(\subref{diag:dynamicPricesa}), the utility of all the PAs for each type of graph job in one round of Monte-Carlo simulation is depicted. As can be seen, after exploring the environment and the other PAs' actions, each PA identifies the more rewarding strategies. Also, due to the inherent relationship between the utility of the PAs and the power consumption of the DCs, this method leads to less power consumption of DCs. This fact is revealed in Fig.~\ref{diag:dynamicPrices}(\subref{diag:dynamicPricesb}), where the corresponding power consumption associated with the utilized DCs opted by all the PAs is depicted for the RMBA algorithm and the random selection strategy. Fig.~\ref{diag:dynamicPrices}(\subref{diag:dynamicPricesc}) depicts the reduction in \textit{currency in circulation} obtained through less money gathering from the PAs and less payment for the electricity using the RMBA algorithm as compared to the random strategy selection.

\section{Conclusion}\label{sec:concl}
\noindent In this work, we study the problem of graph job allocation in geo-distributed cloud networks ({\GC}s). The slot-based quantization of the resources of the DCs is considered. Inspired by big-data driven applications, it is considered that tasks are composed of multiple sub-tasks, which need multiple slots of the DCs with a determined communication pattern. The cost-effective graph job allocation in {\GC}s is formulated as an integer programming problem. For small-scale {\GC}s, given the feasible assignments of the graph jobs, we propose an analytic sequential sub-optimal solution to the problem. For medium-scale {\GC}s, we introduce a distributed algorithm using the communication infrastructure of the network. Given the impracticality of those methods in large-scale {\GC}s, we propose a decentralized graph job allocation framework based on the idea of strategy suggestion using our introduced cloud crawlers (CCRs). To opt efficient strategies from the pool of suggested strategies, we propose two online learning algorithms for the PAs considering fixed and adaptive pricing of DCs. Extensive simulations are conducted to reveal the effectiveness of all the proposed algorithms in {\GC}s with different scales. For the future work, we suggest studying graph jobs with heterogeneous order of nodes' execution. {\color{black}Also, encapsulating the mathematical model of network link outages into the allocation of graph jobs among multiple DCs is worth further investigation.}

\bibliographystyle{IEEEtran}
\bibliography{BibCloud}

% Generated by IEEEtran.bst, version: 1.14 (2015/08/26)
\begin{thebibliography}{10}
\providecommand{\url}[1]{#1}
\csname url@samestyle\endcsname
\providecommand{\newblock}{\relax}
\providecommand{\bibinfo}[2]{#2}
\providecommand{\BIBentrySTDinterwordspacing}{\spaceskip=0pt\relax}
\providecommand{\BIBentryALTinterwordstretchfactor}{4}
\providecommand{\BIBentryALTinterwordspacing}{\spaceskip=\fontdimen2\font plus
\BIBentryALTinterwordstretchfactor\fontdimen3\font minus
  \fontdimen4\font\relax}
\providecommand{\BIBforeignlanguage}[2]{{%
\expandafter\ifx\csname l@#1\endcsname\relax
\typeout{** WARNING: IEEEtran.bst: No hyphenation pattern has been}%
\typeout{** loaded for the language `#1'. Using the pattern for}%
\typeout{** the default language instead.}%
\else
\language=\csname l@#1\endcsname
\fi
#2}}
\providecommand{\BIBdecl}{\relax}
\BIBdecl

\bibitem{ref:bigdataRise}
I.~A.~T. Hashem, I.~Yaqoob, N.~B. Anuar, S.~Mokhtar, A.~Gani, and S.~U. Khan,
  ``The rise of “big data” on puting: Review and open research issues,''
  \emph{Inform. Syst.}, vol.~47, pp. 98--115, 2015.

\bibitem{ref:BigdataIOT}
H.~Cai, B.~Xu, L.~Jiang, and A.~V. Vasilakos, ``Iot-based big data storage
  systems in cloud computing: Perspectives and challenges,'' \emph{IEEE
  Internet Things J.}, vol.~4, no.~1, pp. 75--87, Feb 2017.

\bibitem{ref:BidDataService}
L.-J. Zhang, ``Editorial: Big services era: Global trends of cloud computing
  and big data,'' \emph{IEEE Trans. Services Computing}, vol.~5, no.~4, pp.
  467--468, 2012.

\bibitem{big2}
C.~{Ji}, Y.~{Li}, W.~{Qiu}, U.~{Awada}, and K.~{Li}, ``Big data processing in
  cloud computing environments,'' in \emph{Proc. 12th Int. Symp. Pervasive
  Syst., Alg. Netw.}, Dec 2012, pp. 17--23.

\bibitem{big3}
C.~Yang, Q.~Huang, Z.~Li, K.~Liu, and F.~Hu, ``Big data and cloud computing:
  innovation opportunities and challenges,'' \emph{Int. J. Digit. Earth},
  vol.~10, no.~1, pp. 13--53, 2017.

\bibitem{big1}
D.~Agrawal, S.~Das, and A.~El~Abbadi, ``Big data and cloud computing: Current
  state and future opportunities,'' in \emph{Proc. 14th Int. Conf. Extending
  Database Technol.}, ser. EDBT/ICDT '11.\hskip 1em plus 0.5em minus
  0.4em\relax ACM, 2011, pp. 530--533.

\bibitem{ref:ApapcheStorm}
``Apache storm,'' \url{http://storm.apache.org/}, accessed: 2018-03-03.

\bibitem{ref:HadoopWebsite}
``Graph{L}ab,'' \url{https://turi.com/}, accessed: 2018-03-03.

\bibitem{bookIBM}
S.~Soares, \emph{{IBM} {I}nfoSphere: A platform for Big Data governance and
  process data governance}.\hskip 1em plus 0.5em minus 0.4em\relax MC Press,
  2013.

\bibitem{onlinemapreduce}
T.~Condie, N.~Conway, P.~Alvaro, J.~M. Hellerstein, K.~Elmeleegy, and R.~Sears,
  ``Mapreduce online,'' vol.~10, no.~1, 2010, p.~20.

\bibitem{Yahoos4}
\BIBentryALTinterwordspacing
Yahoo's open sourced {S4} could be a real-time cloud platform. [Online].
  Available:
  \url{https://www.programmableweb.com/news/yahoos-open-sourced-s4-could-be-real-time-cloud-platform/2010/12/31}
\BIBentrySTDinterwordspacing

\bibitem{ref:allocation1}
L.~Shi, Z.~Zhang, and T.~Robertazzi, ``Energy-aware scheduling of
  embarrassingly parallel jobs and resource allocation in cloud,'' \emph{IEEE
  Trans. Parallel Distrib. Syst.}, vol.~28, no.~6, pp. 1607--1620, June 2017.

\bibitem{ref:allocation5}
W.~Wang, B.~Liang, and B.~Li, ``Multi-resource fair allocation in heterogeneous
  cloud computing systems,'' \emph{IEEE Trans. Parallel Distrib. Syst.},
  vol.~26, no.~10, pp. 2822--2835, Oct 2015.

\bibitem{extref1}
S.~Ren, Y.~He, and F.~Xu, ``Provably-efficient job scheduling for energy and
  fairness in geographically distributed data centers,'' in \emph{Proc. IEEE
  32nd Int. Conf. Distrib. Comput. Syst.}, 2012, pp. 22--31.

\bibitem{extref2}
Y.~Yao, L.~Huang, A.~Sharma, L.~Golubchik, and M.~Neely, ``Data centers power
  reduction: A two time scale approach for delay tolerant workloads,'' in
  \emph{Proc. IEEE Int. Conf. Comput. Commun. (INFOCOM)}, 2012, pp. 1431--1439.

\bibitem{extref3}
H.~Xu and B.~Li, ``Joint request mapping and response routing for
  geo-distributed cloud services,'' in \emph{Proc. IEEE Int. Conf. Comput.
  Commun. (INFOCOM)}, 2013, pp. 854--862.

\bibitem{kaewpuang2013framework}
R.~Kaewpuang, D.~Niyato, P.~Wang, and E.~Hossain, ``A framework for cooperative
  resource management in mobile cloud computing,'' \emph{IEEE J. Sel. Areas
  Commun.}, vol.~31, no.~12, pp. 2685--2700, 2013.

\bibitem{ref:DGLB}
T.~Chen, A.~G. Marques, and G.~B. Giannakis, ``{DGLB}: Distributed stochastic
  geographical load balancing over cloud networks,'' \emph{IEEE Trans. Parallel
  Distrib. Syst.}, vol.~28, no.~7, pp. 1866--1880, July 2017.

\bibitem{survey}
P.~Poullie, T.~Bocek, and B.~Stiller, ``A survey of the state-of-the-art in
  fair multi-resource allocations for data centers,'' \emph{IEEE Trans. Netw.
  Service Manage.}, vol.~15, no.~1, pp. 169--183, 2018.

\bibitem{extra1}
X.~Liu, X.~Zhang, W.~Li, and X.~Zhang, ``Swarm optimization algorithms applied
  to multi-resource fair allocation in heterogeneous cloud computing systems,''
  \emph{Computing}, vol.~99, no.~12, pp. 1231--1255, 2017.

\bibitem{extra4}
Z.~{Zhang} and X.~{Zhang}, ``A load balancing mechanism based on ant colony and
  complex network theory in open cloud computing federation,'' in \emph{Proc.
  2nd Int. Conf. Ind. Mechatronics Autom.}, vol.~2, May 2010, pp. 240--243.

\bibitem{zhou}
A.~C. Zhou, Y.~Gong, B.~He, and J.~Zhai, ``Efficient process mapping in
  geo-distributed cloud data centers,'' in \emph{Proc. Int. Conf. High
  Performance Comput., Netw., Storage Anal.}, ser. SC '17.\hskip 1em plus 0.5em
  minus 0.4em\relax New York, NY, USA: ACM, 2017, pp. 16:1--16:12.

\bibitem{ref:javad}
J.~Ghaderi, S.~Shakkottai, and R.~Srikant, ``Scheduling storms and streams in
  the cloud,'' in \emph{Proc. 2015 ACM SIGMETRICS Int. Conf. Measurement
  Modeling Comput. Syst.}\hskip 1em plus 0.5em minus 0.4em\relax ACM, 2015, pp.
  439--440.

\bibitem{ref:javad2}
------, ``Scheduling storms and streams in the cloud,'' \emph{ACM Trans. Model.
  Perform. Eval. Comput. Syst.}, vol.~1, no.~4, pp. 14:1--14:28, 2016.

\bibitem{ref:nonpr}
K.~Psychas and J.~Ghaderi, ``On non-preemptive {VM} scheduling in the cloud,''
  in \emph{Proc. ACM Meas. Anal. Comput. Syst.}, vol.~1, no.~2.\hskip 1em plus
  0.5em minus 0.4em\relax ACM, Dec. 2017, pp. 35:1--35:29.

\bibitem{ref:powerDCsurvey}
M.~Dayarathna, Y.~Wen, and R.~Fan, ``Data center energy consumption modeling: A
  survey,'' \emph{IEEE Commun. Surveys Tut.}, vol.~18, no.~1, pp. 732--794,
  2016.

\bibitem{ref:gambler}
P.~Auer, N.~Cesa-Bianchi, P.~Fischer, and R.~Schapire, ``The nonstochastic
  multiarmed bandit problem,'' \emph{SIAM J. Comput.}, vol.~32, no.~1, pp.
  48--77, 2002.

\bibitem{ref:regretmatching}
S.~Hart and A.~Mas-Colell, ``A simple adaptive procedure leading to correlated
  equilibrium,'' \emph{Econometrica}, vol.~68, no.~5, pp. 1127--1150, 2000.

\bibitem{ref:agent}
K.~M. Sim, ``Agent-based interactions and economic encounters in an intelligent
  intercloud,'' \emph{IEEE Trans. on Cloud Comput.}, vol.~3, no.~3, pp.
  358--371, July 2015.

\bibitem{Ali:twoStage}
S.~{Hosseinalipour} and H.~{Dai}, ``A two-stage auction mechanism for cloud
  resource allocation,'' \emph{IEEE Trans. Cloud Comput.}, pp. 1--1, 2019.

\bibitem{Ali:option}
S.~Hosseinalipour and H.~Dai, ``Options-based sequential auctions for dynamic
  cloud resource allocation,'' in \emph{Proc. 2017 IEEE Int. Conf. Commun.
  (ICC)}, May 2017, pp. 1--6.

\bibitem{subgraphNPcomplete}
R.~C. Read and D.~G. Corneil, ``The graph isomorphism disease,'' \emph{J. Graph
  Theory}, vol.~1, no.~4, pp. 339--363, 1977.

\bibitem{ref:powerInfo}
Y.~Yao, L.~Huang, A.~B. Sharma, L.~Golubchik, and M.~J. Neely, ``Power cost
  reduction in distributed data centers: A two-time-scale approach for delay
  tolerant workloads,'' \emph{IEEE Trans. Parallel Distrib. Syst.}, vol.~25,
  no.~1, pp. 200--211, Jan 2014.

\bibitem{boyd}
S.~Boyd and L.~Vandenberghe, \emph{Convex optimization}.\hskip 1em plus 0.5em
  minus 0.4em\relax Cambridge university press, 2004.

\bibitem{bentley1975multidimensional}
J.~L. Bentley, ``Multidimensional binary search trees used for associative
  searching,'' \emph{Commun. ACM}, vol.~18, no.~9, pp. 509--517, 1975.

\bibitem{ref:DistCens}
B.~Johansson, T.~Keviczky, M.~Johansson, and K.~H. Johansson, ``Subgradient
  methods and consensus algorithms for solving convex optimization problems,''
  in \emph{Proc. 47th IEEE Conf. Decision Control (CDC)}, 2008, pp. 4185--4190.

\bibitem{ref:crawler1}
A.~Heydon and M.~Najork, ``Mercator: A scalable, extensible web crawler,''
  \emph{World Wide Web}, vol.~2, no.~4, pp. 219--229, 1999.

\bibitem{ref:crawler2}
J.~Cho, H.~Garcia-Molina, and L.~Page, ``Efficient crawling through url
  ordering,'' \emph{Comput. Netw. ISDN Syst.}, vol.~30, no. 1-7, pp. 161--172,
  1998.

\bibitem{ref:crawler3}
P.~Boldi, B.~Codenotti, M.~Santini, and S.~Vigna, ``Ubicrawler: A scalable
  fully distributed web crawler,'' \emph{Software: Practice Experience},
  vol.~34, no.~8, pp. 711--726, 2004.

\bibitem{ref:DS}
T.~H. Cormen, C.~E. Leiserson, R.~L. Rivest, and C.~Stein, \emph{Introduction
  to Algorithms}, 3rd~ed.\hskip 1em plus 0.5em minus 0.4em\relax The MIT Press,
  2009.

\bibitem{list}
``{Data Structures: List},''
  \url{https://docs.python.org/3/tutorial/datastructures.html}, accessed:
  04-04-2018.

\bibitem{ref:clustering}
S.~Lloyd, ``Least squares quantization in {PCM},'' \emph{IEEE trans. Inf.
  Theory}, vol.~28, no.~2, pp. 129--137, 1982.

\bibitem{zhao2014dynamic}
J.~Zhao, H.~Li, C.~Wu, Z.~Li, Z.~Zhang, and F.~C. Lau, ``Dynamic pricing and
  profit maximization for the cloud with geo-distributed data centers,'' in
  \emph{Proc. IEEE Conf. Comput. Commun. (INFOCOM)}, 2014, pp. 118--126.

\bibitem{wan2016reactive}
J.~Wan, R.~Zhang, X.~Gui, and B.~Xu, ``Reactive pricing: an adaptive pricing
  policy for cloud providers to maximize profit,'' \emph{IEEE Trans. Netw.
  Service Manage.}, vol.~13, no.~4, pp. 941--953, 2016.

\bibitem{gyarmati2010scafida}
L.~Gyarmati and T.~A. Trinh, ``Scafida: A scale-free network inspired data
  center architecture,'' \emph{ACM SIGCOMM Comput. Commun. Rev.}, vol.~40,
  no.~5, pp. 4--12, 2010.

\bibitem{ref:SF}
A.-L. Barabási, R.~Albert, and H.~Jeong, ``Scale-free characteristics of
  random networks: the topology of the world-wide web,'' \emph{Physica A:
  Statist. Mech. Appl.}, vol. 281, no.~1, pp. 69--77, 2000.

\end{thebibliography}
\vspace{0mm}
    \begin{IEEEbiography}[{\includegraphics[width=1.08in,height=1.25in,clip]{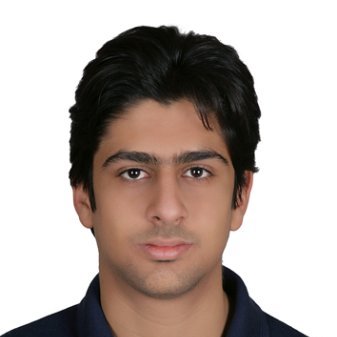}}]{Seyyedali Hosseinalipour} (S’17) received the B.S. degree in Electrical Engineering from Amirkabir University of Technology (Tehran Polytechnic), Tehran, Iran in 2015. He is currently pursuing a Ph.D. degree in the Department of Electrical and Computer Engineering at North Carolina State University, Raleigh, NC, USA. His research interests include analysis of wireless networks, resource allocation and load
balancing for cloud networks, and analysis of vehicular ad-hoc networks.
\end{IEEEbiography}

\begin{IEEEbiography}[{\includegraphics[width=1.12in,height=1.25in,clip]{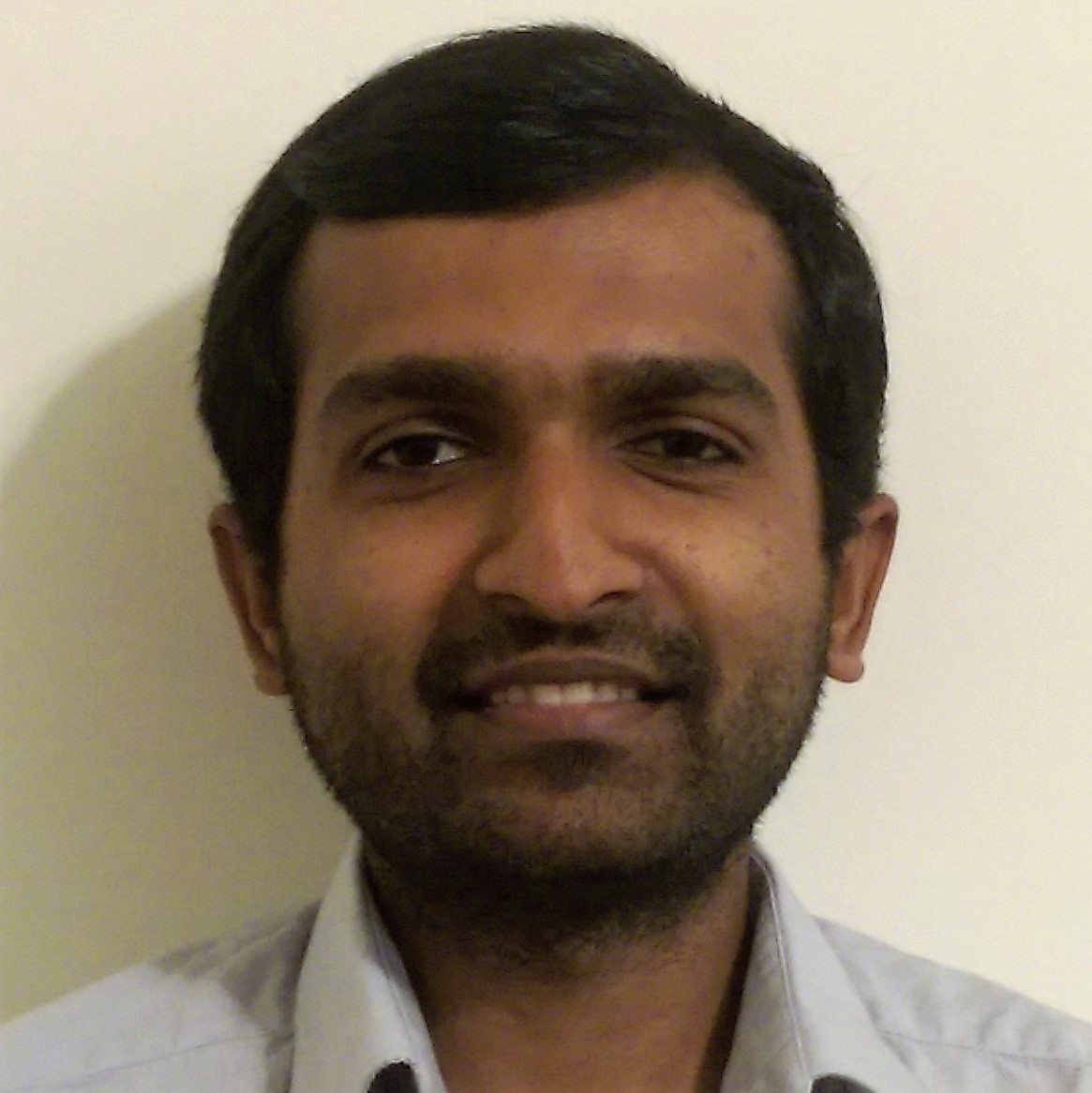}}]{Anuj Nayak} received his B. E. degree in Electronics and Communication Engineering from PES Institute of Technology, Bengaluru, India in 2014. He worked with Signalchip Innovations Pvt. Ltd., India as an algorithm design engineer from 2014 to 2016. He joined Master of Science in Electrical Engineering at North Carolina State University in 2016. His research interests are in the area of complex networks, statistical signal processing and artificial intelligence.

\end{IEEEbiography}
\begin{IEEEbiography}[{\includegraphics[width=1.12in,height=1.25in,clip]{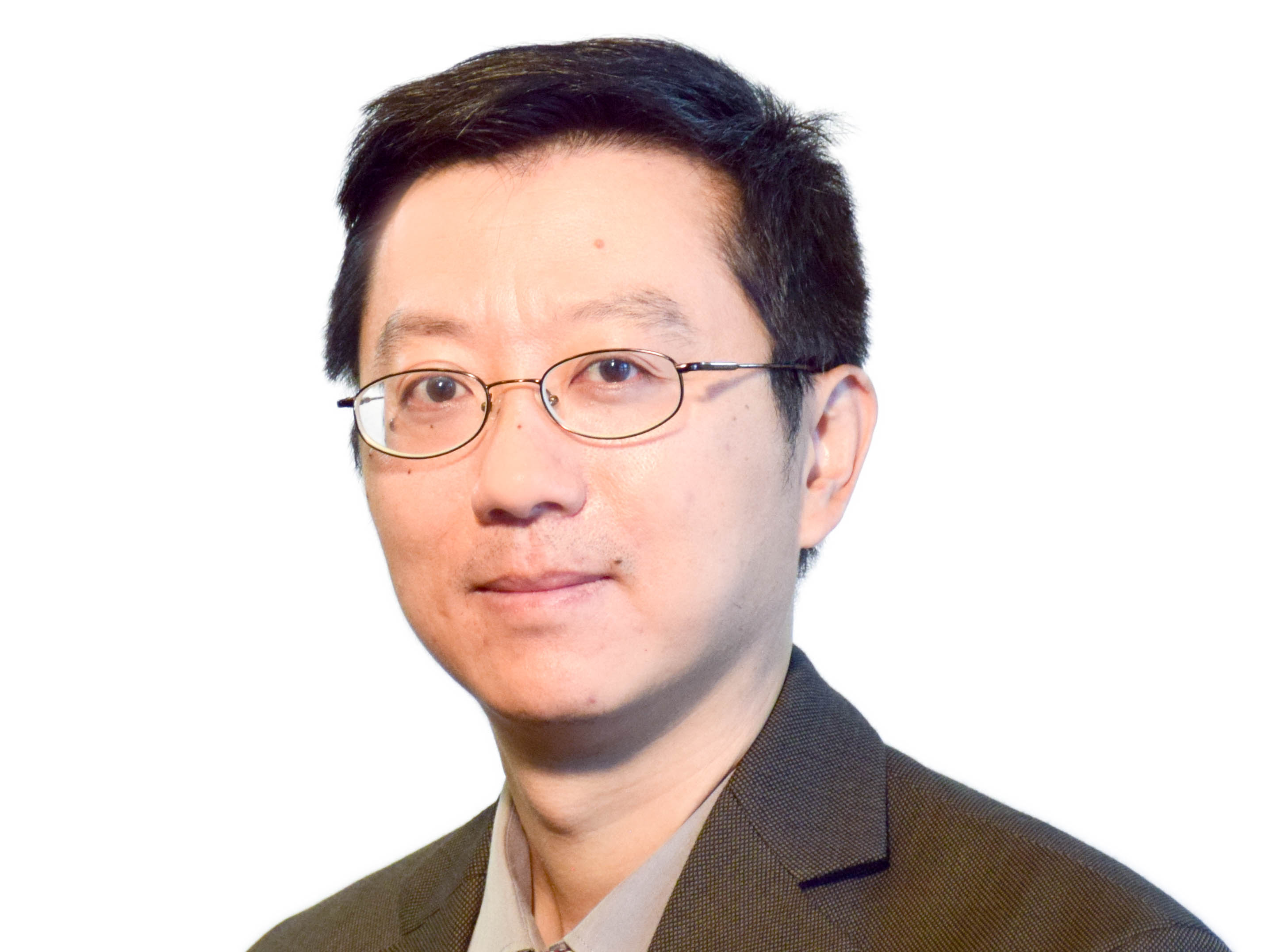}}]{Huaiyu Dai} (F’17) received the B.E. and M.S. degrees in electrical engineering from Tsinghua University, Beijing, China, in 1996 and 1998, respectively, and the Ph.D. degree in electrical engineering from Princeton University, Princeton, NJ in 2002. 

He was with Bell Labs, Lucent Technologies, Holmdel, NJ, in summer 2000, and with AT\&T Labs-Research, Middletown, NJ, in summer 2001. He is currently a Professor of Electrical and Computer Engineering with NC State University, Raleigh, holding the title of University Faculty Scholar. His research interests are in the general areas of communication systems and networks, advanced signal processing for digital communications, and communication theory and information theory. His current research focuses on networked information processing and crosslayer design in wireless networks, cognitive radio networks, network security, and associated information-theoretic and computation-theoretic analysis.  

He has served as an editor of IEEE Transactions on Communications, IEEE Transactions on Signal Processing, and IEEE Transactions on Wireless Communications. Currently he is an Area Editor in charge of wireless communications for IEEE Transactions on Communications. He co-chaired the Signal Processing for Communications Symposium of IEEE Globecom 2013, the Communications Theory Symposium of IEEE ICC 2014, and the Wireless Communications Symposium of IEEE Globecom 2014. He was a co-recipient of best paper awards at 2010 IEEE International Conference on Mobile Ad-hoc and Sensor Systems (MASS 2010), 2016 IEEE INFOCOM BIGSECURITY Workshop, and 2017 IEEE International Conference on Communications (ICC 2017).
\end{IEEEbiography}

    \end{document}